\documentclass[11pt]{article}

\usepackage[utf8]{inputenc}
\usepackage[english]{babel}
\usepackage{kotex}

\usepackage{amsmath, amsthm, amssymb}
\usepackage{mathtools}
\usepackage{bm}       
\usepackage[margin=3.3cm]{geometry}
\usepackage{float}     
\usepackage{algorithm}
\usepackage{algpseudocode}
\usepackage{graphicx}
\usepackage{caption}
\usepackage{subcaption}
\usepackage{booktabs} 
\usepackage{makecell} 
\usepackage{multirow} 
\usepackage[sort&compress, round, semicolon, authoryear]{natbib}
\usepackage[hyphens]{url}
\usepackage[hidelinks]{hyperref}
\usepackage{cleveref}
\usepackage{tikz}
\usetikzlibrary{arrows, positioning, shapes.symbols}

\newtheorem{thm}{Theorem}[]
\newtheorem{lemma}{Lemma}[]

\theoremstyle{remark}
\newtheorem{remark}{Remark}[]

\begin{document}

\title{Dual-Homotopy Framework for Constrained EM Algorithm} 

\author{
{\sc Jisoo Choi}~~and {\sc Hee-Seok Oh}\\
Seoul National University, Seoul 08826, Korea\\
}
\date{ }

\maketitle
\begin{abstract}
We propose a new constrained EM algorithm that is applicable to general constrained estimation problems. The proposed method is based on a novel framework, the `dual-homotopy framework,' which combines deterministic annealing EM with a barrier-based optimization, enabling stable estimation under parameter constraints. Building on this framework, we further introduce an adaptive constrained EM algorithm that preserves likelihood monotonicity, regardless of the underlying distributional form or the specific structure of the constraints. Through simulation studies and a real-data analysis, both under parameter constraints, we demonstrate that the proposed algorithm yields more stable and accurate estimates than existing methods, including the standard EM algorithm. 
\end{abstract}

{\it Keywords}: {\small Constrained EM algorithm; Deterministic annealing; Dual homotopy; Barrier optimization; Likelihood monotonicity.}

\newpage
\section{Introduction}

The EM algorithm is a widely used method for likelihood-based inference with latent variables and provides monotonicity and convergence guarantees in unconstrained parameter spaces. However, in many applications where parameters are constrained by structural assumptions or domain knowledge, such as cognitive diagnosis, psychiatric evaluation, or reliability engineering and insurance \citep{wang2002modeling, guPartialIdentifiabilityRestricted2020}, these classical guarantees typically do not apply. These constraints are often introduced to ensure interpretability and practical relevance, yet incorporating arbitrary constraints into the EM framework while preserving likelihood monotonicity remains challenging. In the literature, constrained EM algorithms are well known to suffer from boundary stagnation, which can lead to a loss of monotonic increase in the observed-data likelihood \citep{nettletonConvergencePropertiesEM1999}. Although several constrained EM-type procedures have been developed for specific models, particularly Gaussian mixture models \citep{hathawayConstrainedEmAlgorithm1986, ingrassiaLikelihoodbasedConstrainedAlgorithm2004, ingrassiaConstrainedMonotoneEM2007}, these approaches do not extend to general distributions or arbitrary constraints, and the classical convergence theory of the EM algorithm \citep{dempsterMaximumLikelihoodIncomplete1977a, wuConvergencePropertiesEM1983} does not readily apply to constrained problems.

In this paper, to ensure stable estimation under parameter constraints, we propose a constrained EM framework based on a dual homotopy construction, termed the dual-homotopy EM (DHEM) algorithm. This proposed method combines deterministic annealing EM (DAEM) in the E-step with a barrier method in the M-step within a unified scheme. Furthermore, we develop an adaptive constrained EM algorithm in which parameters are updated according to an explicit acceptance rule that ensures a monotonic increase in the observed-data likelihood. Thus, this adaptive method provides a general constrained learning procedure that preserves likelihood monotonicity in constrained EM algorithms, regardless of the underlying distributional form or the specific structure of the constraints. Specifically, our contributions are summarized as follows:
\begin{itemize}
\item We analyze the latent effects in EM learning and provide a theoretical justification for why DAEM can improve convergence in practice. By studying these latent effects, we demonstrate that DAEM mitigates them, which reduces EM's tendency to converge to local optima.

\item To handle constrained parameters, we incorporate a barrier method into the EM framework. However, the barrier method leads to a pseudo-objective function that can amplify latent effects. To address this, we develop a dual-homotopy EM algorithm (DHEM) by combining it with DAEM.

\item Furthermore, we propose an adaptive DHEM algorithm that provides a clear update rule while ensuring the monotonicity of the observed-data likelihood.
\end{itemize}

The remainder of the paper is organized as follows. 
Section 2 discusses the latent variable in the EM algorithm and reviews deterministic annealing EM. Additionally, we explain how DAEM mitigates latent effects, showing that their impact can be controlled by the annealing parameter. In Section 3, we discuss constrained EM algorithms based on barrier methods. Section 4 proposes a new dual-homotopy EM (DHEM) and an adaptive algorithm that enforces likelihood monotonicity. In Section 5, to evaluate the performance of the proposed method and compare it with existing algorithms, we present examples for various distributions and parameter constraints, including simulation studies based on Gaussian mixture and zero-inflated Poisson models, as well as an analysis of real-world data using a Weibull mixture model. Finally, concluding remarks are offered in Section 6. The {\tt R} codes used for the numerical experiments are available at \url{https://github.com/ddsy999/paper-DHEM}, and the detailed proofs of all theoretical results are provided in the Appendix. 

\section{EM and Deterministic Annealing EM}\label{sec2}
\subsection{EM Algorithm}

Many statistical learning problems are formulated as maximum likelihood estimation (MLE) based on the observed data $X_o$, with the goal of maximizing the observed-data log-likelihood $l_o$. However, directly maximizing $l_o$ is often computationally challenging or intractable. To address this challenge, latent variables $X_m$ are introduced, and the complete data are defined as $X_c = (X_o, X_m)$. The resulting complete-data likelihood is usually easier to handle than $l_o$, even though maximizing it does not always maximize $l_o$. The procedure that uses this complete-data likelihood to iteratively increase $l_o$ is the EM algorithm.


This standard EM algorithm is appealing because it exploits the complete-data likelihood while ensuring a monotonic increase in the observed-data likelihood $l_o$, and, under suitable regularity conditions, it converges to stationary points of the likelihood function \citep{dempsterMaximumLikelihoodIncomplete1977a,wuConvergencePropertiesEM1983}. The EM algorithm alternates between an E-step, which computes the posterior distribution of the latent variables given the current parameter $\theta_0$, i.e., $P(X_m | X_o, \theta_0)$ (the standard posterior), and an M-step, which updates $\theta$ to maximize the conditional expectation of the complete-data log-likelihood, $Q(\theta | \theta_0)$. Updating $\theta$ only to increase $Q(\theta | \theta_0)$ yields a generalized EM (GEM) algorithm. 
%
%
The standard EM algorithm can be understood as an iterative procedure that alternates between parameter estimation and updating the latent distribution. In this view, the M-step corresponds to parameter estimation and often receives the most attention, but the E-step, which determines the latent distribution, is equally important to the proper functioning of the EM algorithm. Although the E-step is often considered straightforward because of its closed-form expression, it plays a critical role in shaping the optimization landscape.

The free-energy perspective provides a direct link between the introduction of latent variables and the EM procedure, allowing the algorithm to be interpreted as a joint optimization over the model parameter $\theta$ and a latent distribution $q$. This perspective highlights that the E-step is not merely a computational step but the solution to an optimization problem, placing it on equal footing with the M-step. In particular, viewing EM as an alternating maximization over $(\theta, q)$ makes it clear that both steps are equally essential to the overall optimization. Specifically, the EM algorithm performs alternating maximization of the free-energy functional defined as
\begin{equation}
\label{eq:standardFreeEnergy}
F(q,\theta) = E_q[\log P(X_c | \theta)] + H(q)
= Q(\theta | q) + H(q),
\end{equation}
where $H(q)$ denotes the entropy of the latent distribution $q$. Maximizing with respect to $q$ for fixed $\theta$ yields the E-step, while maximizing with respect to $\theta$ for fixed $q$ gives the M-step \citep{csiszar1984information,nealViewEmAlgorithm1998a}.

\subsection{Latent Effects}
While introducing latent variables makes likelihood maximization computationally feasible, it also adds complexity. By enlarging the model with unobserved components, the dimension of the problem increases; thus, algorithms operating in this expanded space may behave differently from direct maximization of $l_o$. We refer to these phenomena as latent effects, which can be summarized in two aspects.
First, issues may arise when latent variables are poorly designed. There is no general guideline for constructing latent variables, and the choice is not unique. Although many practitioners construct latent variables for convenience or tractability, there is no guarantee that this aligns with the goal of maximizing the observed-data likelihood. In practice, computational convenience when evaluating the complete-data likelihood is often prioritized, but this does not necessarily imply that it is advantageous for maximizing the observed-data likelihood.

The second issue is that latent variables depend on the previous parameter values $\theta_0$. At each iteration, the latent variables are estimated via $P(X_m|X_o,\theta_0)$ in the E-step, making them directly dependent on the current parameter estimates. These latent variables determine the expected complete-data log-likelihood and thus play a central role in shaping the subsequent M-step updates. Consequently, the previous parameter values influence future updates through the latent variables, which explains why the standard EM algorithm can converge to local maxima when poorly initialized. Moreover, in latent variable models, introducing unobserved components can lead to identifiability issues, as different parameter configurations may yield indistinguishable observed-data distributions unless additional structural restrictions are imposed \citep{guPartialIdentifiabilityRestricted2020}.
%
Several approaches have been proposed to reduce this sensitivity, including improved initialization strategies \citep{biernackiChoosingStartingValues2003} and techniques for escaping local optima. 
In summary, although latent variables enhance computational tractability, they also introduce additional structural complexity that can lead to unstable or suboptimal learning behavior.

\subsection{Deterministic Annealing EM and Reducing Latent Effects}\label{sec:latentEffect}
%

The deterministic annealing EM (DAEM) algorithm, proposed by \citet{uedaDeterministicAnnealingVariant1994,uedaDeterministicAnnealingEM1998}, modifies the E-step of the standard EM algorithm by introducing an annealed (tempered) latent posterior. DAEM has been shown to reduce sensitivity to initial values and mitigate convergence to poor local optima \citep{zamzamiNovelScaledDirichletbased2019,guoParameterizedDeterministicAnnealing2008,bouguilaIntegratingSpatialColor2010}. A comprehensive tutorial review of DAEM methods and their applications is provided by \citet{roseDeterministicAnnealingClustering1998}. In this paper, we provide a more formal, mathematical characterization of the effects of DAEM.

To be specific, in DAEM, the standard posterior is replaced by the annealed posterior, 
\begin{align*}
P_r(X_m | X_o,\theta^{(t)}) 
= \frac{P(X_o,X_m | \theta^{(t)})^r}
{\int P(X_o,X_m | \theta^{(t)})^r \, dX_m},
\end{align*}
where $r \in (0,1]$ is the annealing parameter. This is equal to the standard posterior when $r=1$, while as $r \to 0$, the distribution becomes more diffuse, and latent entropy increases.
%
DAEM performs EM updates with a fixed value of $r$ and then gradually increases $r$ toward 1. The parameter estimate at each annealing level serves as the starting point for the next level. Thus, $r$ serves as a hyperparameter controlling the learning process.

We now explain how DAEM reduces latent effects from three perspectives. First, 
%
%
DAEM admits a natural interpretation in terms of free energy, similar to the standard EM algorithm of (\ref{eq:standardFreeEnergy}).
In general, DAEM maximizes
\[
F(q,\theta|r)
= Q(\theta|q) + \frac{1}{r} H(q),
\quad r \in (0,1].
\]
When $r<1$, the entropy term is weighted more heavily than in the standard EM algorithm. As discussed in \citet{uedaDeterministicAnnealingEM1998}, this induces diffuse latent distributions, thereby reducing the impact of latent effects on estimation and decreasing sensitivity to previous parameter estimates. This entropy-based regularization explains why DAEM reduces sensitivity to initial values and mitigates the risk of converging to local optima.

Second, the ability of DAEM to reduce latent effects can be understood not only through the free-energy interpretation above but also by directly analyzing its objective function. Specifically, we express the DAEM objective, $Q_r(\theta|\theta_0)$, as 
\begin{align}
\label{eq:daemobj}
    Q_r(\theta|\theta_0)
    &=\int P_r(X_m|X_o,\theta_0)\log P(X_c|\theta)\,dX_m\nonumber\\
    &=\log P(X_o|\theta)+\int \log P(X_m|X_o,\theta)P_r(X_m|X_o,\theta_0)dX_m\nonumber\\
    &\triangleq l_o(\theta)+G(\theta,\theta_0|r),
\end{align}
which can be decomposed into the observed-data log-likelihood and a latent-effect term. As shown in (\ref{eq:daemobj}), the objective function admits a decomposition in which the observed-data log-likelihood $l_o(\theta)$ is separated from a remaining term that depends on the latent variable and the previous parameter $(X_m,\theta_0)$. This term is denoted by $G(\theta,\theta_0|r)$ and captures the latent effect. 

\begin{thm}\label{thm:reducingLatentEffect}
For $0 < r \le 1/2$, an upper bound on $|G(\theta,\theta_0|r)|$ is monotone in $r$.
\end{thm}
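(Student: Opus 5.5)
We will bound $|G(\theta,\theta_0|r)|$ by a quantity that depends on $r$ only through the annealed posterior, and then show that this quantity is monotone in $r$ on $(0,1/2]$. Write $p_\theta(x_m)=P(X_m|X_o,\theta)$ and $p_r(x_m)=P_r(X_m|X_o,\theta_0)\propto P(X_o,X_m|\theta_0)^r \propto p_{\theta_0}(x_m)^r$. The normalizing constant $P(X_o|\theta_0)^r$ cancels, so $p_r = p_{\theta_0}^r/Z_r$ with $Z_r=\int p_{\theta_0}^r\,dX_m$. Since $G(\theta,\theta_0|r)=\int p_r\log p_\theta\,dX_m$ is a cross-entropy, we first decompose
\begin{align*}
-G(\theta,\theta_0|r) = H(p_r) + \mathrm{KL}(p_r\,\|\,p_\theta).
\end{align*}
This separates a part depending only on the annealing from a divergence part.

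To obtain an explicit $r$-dependent upper bound without KL terms that are hard to control, we plan to use Cauchy--Schwarz (or Hölder) directly on the integral:
\begin{align*}
|G| \le \frac{1}{Z_r}\int p_{\theta_0}^r\,|\log p_\theta|\,dX_m \le \frac{1}{Z_r}\Big(\int p_{\theta_0}^{2r}dX_m\Big)^{1/2}\Big(\int (\log p_\theta)^2 dX_m\Big)^{1/2}.
\end{align*}
The second factor is free of $r$, and we assume it is finite, as is automatic for finite latent spaces. The first factor is $B(r)=\|p_{\theta_0}^r\|_2/\|p_{\theta_0}^r\|_1=\sqrt{Z_{2r}}/Z_r$. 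This is exactly where the restriction $r\le 1/2$ enters: it gives $2r\le 1$, so $Z_{2r}$ is a moment of order at most one of a probability density and is finite, for instance $\le |\mathcal X_m|^{1-2r}$ in the discrete case by Jensen. So the restriction on $r$ is naturally explained by this choice.

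The remaining step is to show that $B(r)$ is monotone, specifically nondecreasing, so that the latent effect shrinks as $r\downarrow 0$. Set $\phi(s)=\log Z_s=\log\int p_{\theta_0}^s$. This is a log-moment-generating function of $\log p_{\theta_0}$ and is therefore convex. We then have $\log B(r)=\tfrac12\phi(2r)-\phi(r)$, with derivative $\phi'(2r)-\phi'(r)$. Convexity gives $\phi'(2r)\ge\phi'(r)$, so $\log B$ is nondecreasing on $(0,1/2]$.

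We expect the main obstacle to be justifying the regularity: differentiating under the integral, and the finiteness of $Z_s$ for $s\in(0,1]$ when the latent space is continuous or infinite. We would first handle the finite or discrete latent case, which covers mixtures, where everything is a finite sum and convexity of $\phi$ follows from Hölder's inequality. We would then state an integrability assumption, $\int p_{\theta_0}^{s}<\infty$ for $s\in(0,1]$ and $\int(\log p_\theta)^2<\infty$, for the general case.
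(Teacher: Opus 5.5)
Your argument is correct, but it takes a different route from the paper at the key step. The paper applies Cauchy--Schwarz under the posterior $P(\cdot|X_o,\theta_0)$. That is, it splits $p_{\theta_0}^{r}\log p_\theta = p_{\theta_0}^{r+1/2}\cdot p_{\theta_0}^{-1/2}\log p_\theta$, which in your normalization gives $|G|\le \big(\int (\log p_\theta)^2/p_{\theta_0}\,dX_m\big)^{1/2}\sqrt{Z_{2r+1}}/Z_r$. It does not study this factor directly. Instead it bounds its logarithm with two Jensen inequalities: $\log Z_r\ge (r-1)E_{p_{\theta_0}}\log p_{\theta_0}$ (concavity of $\log$) and $\log Z_{2r+1}\le 2r\log E_{p_{\theta_0}}p_{\theta_0}$ (concavity of $x\mapsto x^{2r}$, the only place $r\le 1/2$ is used). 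This yields an envelope affine in $r$ whose slope is the nonnegative Jensen gap $\log E_{p_{\theta_0}}p_{\theta_0}-E_{p_{\theta_0}}\log p_{\theta_0}$. So the paper proves monotonicity of a bound on a bound.

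You instead split in unweighted $L^2(dX_m)$ and use convexity of $\phi(s)=\log\int p_{\theta_0}^s\,dX_m$ to show that the Cauchy--Schwarz bound itself is monotone. This is tighter, and it shows that $r\le 1/2$ is needed only to guarantee $Z_{2r}<\infty$, not for the monotonicity mechanism. In fact, the same convexity argument applied to $\tfrac12\phi(2r+1)-\phi(r)$ shows that the paper's own factor is already monotone without its Jensen steps. What the paper's route buys is an explicit closed-form envelope with an interpretable slope.

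Your explicit hypothesis $\int(\log p_\theta)^2\,dX_m<\infty$ is appropriate. The paper's Assumption (3) is only an $L^1$ condition and covers neither its own $r$-free factor nor yours.

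Finally, you can sidestep the regularity obstacle you anticipate by not using $\phi'$. Convexity of $\phi$ follows from H\"older. For convex $\phi$ and $0<r_1<r_2$, the chord slope over $[2r_1,2r_2]$ dominates the chord slope over $[r_1,r_2]$, which gives $\tfrac12\phi(2r_2)-\phi(r_2)\ge\tfrac12\phi(2r_1)-\phi(r_1)$ directly.
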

A proof of Theorem \ref{thm:reducingLatentEffect} is provided in the Appendix. This result shows that the influence of latent effects can be controlled by the annealing parameter. As $r$ decreases, the magnitude of the remainder term shrinks, reducing the distortion caused by previous parameter values.
\begin{remark}
\Cref{thm:reducingLatentEffect} establishes monotonicity only for an upper bound on $|G(\theta,\theta_0 | r)|$ over $0 < r \le 1/2$, rather than for $G(\theta,\theta_0 | r)$ itself. From the free energy perspective, however, any $r<1$ increases the relative weight of the entropy term compared with standard EM, suggesting a reduction in latent effects over the entire range $0<r<1$. Moreover, the derived bound is conservative because it is constructed to hold uniformly over $\theta_0$.
\end{remark}

Finally, another interpretation of DAEM is that it expands the effective search region around $\theta_0$. Ignoring entropy constants, the DAEM objective function is expressed as
\begin{align*}
    Q_r(\theta|\theta_0)
    &= \log P(X_o|\theta)
       - \frac{1}{r} D_{KL}(P_r, \theta_0 \| P_r, \theta)
       - \frac{1}{r} H(P_r, \theta_0)
       + \frac{1}{r}\log Z_r(\theta)
       - \log Z(\theta) \\
    &\simeq l_o(\theta) - \frac{1}{r} D_{KL}(P_r,\theta_0 \| P_r,\theta), \notag
\end{align*}
where $ Z_r(\theta)=\int P(X_c|\theta)^r\,dX_m$, $~Z(\theta)\triangleq L_o(\theta) =\int P(X_c|\theta)\,dX_m$, ~and 
\begin{align*}
D_{\mathrm{KL}}(P_r,\theta_0\|P_r,\theta)&=\int \log\!\left(\frac{P_r(X_m|X_o,\theta_0)}{P_r(X_m|X_o,\theta)}\right)P_r(X_m|X_o,\theta_0)dX_m.
\end{align*}
This divergence term, $D_{KL}$, penalizes large deviations from the previous parameter $\theta_0$, limiting $Q_r(\theta|\theta_0)$ from exploring points far from $\theta_0$ in a single iteration. Consequently, the effective search region is locally restricted. If a suboptimal point lies within this region, the update may be attracted to it rather than moving toward the global optimum. However, Lemma 3 in the Appendix establishes that 
\[
\frac{1}{r} D_{KL}(P_r,\theta_0\|P_r,\theta) = O(r).
\]
Therefore, as $r$ decreases, the impact of this penalty weakens, allowing the algorithm to explore a wider range than with the previous parameter. In contrast, the standard EM algorithm lacks a mechanism to control this dependency.
  
\section{Constrained EM algorithms and Barrier Methods}
The EM algorithm demonstrates stable convergence and effective learning in unconstrained parameter spaces and has been successfully used in many statistical models. However, in many real-world cases, model parameters are constrained by theoretical or empirical considerations. These constraints often represent fundamental assumptions that are expected to hold by design and are therefore taken for granted during modeling, to the extent that they are often not explicitly accounted for during the learning process. Nonetheless, unconstrained learning procedures may violate these assumptions in practice, leading to unstable estimation or even failure of the learning process.

A representative example arises in Gaussian mixture models (GMMs) with a fixed number of components. In this setting, it is often implicitly assumed that the covariance matrices remain nonsingular and that the component means are sufficiently well separated, even though these assumptions are not always explicitly enforced during learning. To prevent singularities caused by component collapse or merging, several constrained EM procedures have been proposed \citep{hathawayConstrainedEmAlgorithm1986,ingrassiaLikelihoodbasedConstrainedAlgorithm2004,ingrassiaConstrainedMonotoneEM2007}. Moreover, constrained optimization techniques that incorporate barrier functions and regularization have been widely adopted to enforce structural properties, such as positive definiteness, and to stabilize estimation results \citep{rothmanPositiveDefiniteEstimators2012,xuePositiveDefiniteL1PenalizedEstimation2012}. Despite these developments, constrained EM algorithms do not always ensure stable or monotone convergence. Research on integrated frameworks that support constrained EM algorithms in more general settings remains limited.

To address these difficulties, we adopt a constrained EM approach based on barrier methods that incorporate constraints into an interior penalty term, thereby transforming a constrained problem into a sequence of unconstrained problems. This approach is widely used because it manages complex constraints while using standard unconstrained optimization techniques \citep{fiaccoNonlinearProgramming1990}. 
In the EM framework, this barrier approach is implemented by modifying the M-step objective function. Specifically, the expected complete-data log-likelihood, $Q(\theta | \theta_0)$, is augmented with a barrier function $B(\theta)$, yielding the pseudo log-likelihood
\[
BQ(\theta,\xi | \theta_0) = Q(\theta | \theta_0) + \xi B(\theta),
\]
where $\xi > 0$ is a barrier parameter that controls the strength of the constraint. The barrier function diverges to $-\infty$ at the boundary of the feasible region, preventing boundary solutions. For a fixed $\xi$, the EM algorithm is run until convergence. As $\xi \to 0$, the barrier's effect diminishes, and the corresponding solution approaches that of the original log-likelihood problem. In this way, $\xi$ serves as a hyperparameter that controls the progression of the constrained EM process.
%
In practice, logarithmic barriers are commonly used as a barrier function $B(\theta)$. For example, under the constraint $C = \{\beta : \beta > -2\}$, one may choose $B(\beta) = \log(\beta + 2)$. 
%
\begin{figure}
    \centering 
    \includegraphics[width=0.8\textwidth]{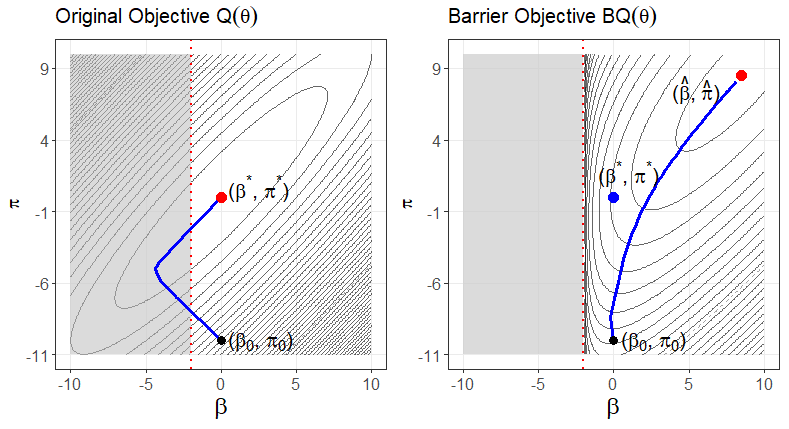}
    \vspace{-7mm}
    \caption{The original objective function $Q(\theta)$ and its solution path (left), and the barrier objective function $BQ(\theta)$ and its solution path.}
    \label{fig:barrierMethod}
\end{figure}
The barrier function enforces learning within the feasible region by imposing constraints, but it distorts the optimization path via a pseudo-log-likelihood. As shown in \Cref{fig:barrierMethod}, the left panel displays the original objective function, $Q(\theta)$, while the right panel shows the barrier objective function, $BQ(\theta)$. The initial value is set to $\beta_0=\beta^\star$. For the original objective $Q(\theta)$, the optimization path may cross the boundary, leading to boundary stagnation when constraints are imposed. By contrast, the barrier objective $BQ(\theta)$ keeps the path within the feasible region by bending the surface away from the boundary. However, because $BQ(\theta)$ is a modified objective, its optimum $(\hat{\beta},\hat{\pi})$ generally differs from the true optimum of $Q(\theta)$. As a result, even though the initial value satisfies $\beta_0=\beta^\star$, the updated parameter can move farther from the true optimum, introducing additional distortion in subsequent iterations.
Because the pseudo log-likelihood $BQ(\theta)$ is used, the resulting parameter estimates satisfy the constraints but may deviate from the true optimum. Moreover, poorly estimated parameters can lead to inaccurate latent-variable estimates in the E-step, which, in turn, influence subsequent updates via the latent-variable effect, as discussed in \Cref{sec:latentEffect}.

\section{Proposed Constrained EM Algorithm}\label{sec:introDHEM}


As discussed in the previous section, although the barrier method ensures that parameter estimates remain within the feasible region, it can amplify the effect of latent variables. This observation motivates the use of DAEM to mitigate the latent effect. Both the barrier method and DAEM rely on hyperparameters and are transformed into the standard EM algorithm in the limiting case. The difference between the two methods lies in the stage at which adjustments are made: DAEM adjusts the E-step, while the barrier method modifies the M-step. We now integrate these two approaches to present the dual-homotopy EM (DHEM) algorithm. Furthermore, we propose an adaptive constrained EM algorithm that updates the hyperparameters according to explicit acceptance rules, guaranteeing that the observed-data likelihood is monotonically increasing.

\subsection{Dual-Homotopy EM Algorithm}
Homotopy describes a continuous deformation between two functions $f, ~g : X \to Y$. A homotopy is a continuous mapping $H(x,t) : X \times [0,1] \to Y$ that satisfies 
\[
H(x,t) = (1-t)f(x) + t\,g(x),~~ 
H(x,0) = f(x),~~ H(x,1) = g(x),
\]
where $t$ parameterizes the deformation. Such constructions are commonly used in continuation methods for tracing solution paths \citep{allgowerIntroductionNumericalContinuation2003}.

Both DAEM and barrier methods allow natural homotopy interpretations. In DAEM, the annealing parameter $r$ gradually deforms the annealed posterior into the standard posterior as $r \to 1$, making $r$ a homotopy parameter in the E-step. In the barrier method, the parameter $\xi$ gradually shifts the penalized objective toward the standard EM objective as $\xi \to 0$, creating a homotopy in the M-step.
%
The dual-homotopy EM (DHEM) algorithm combines these two deformations. Specifically, for constrained estimation, the E-step computes $Q_r(\theta | \theta_0)$ using the annealed posterior, and the M-step maximizes the pseudo log-likelihood function, 
\begin{equation}
\label{dhem}
BQ_{r,\xi}(\theta | \theta_0)= Q_r(\theta | \theta_0) + \xi \cdot B(\theta).
\end{equation}
%
In the early stages, when $\xi$ is large and $r$ is small, a large $\xi$ enforces interior feasibility but can distort the latent variables, while the small value of $r$ reduces sensitivity to such distortions. As learning progresses, $r$ increases while $\xi$ decreases, so DHEM gradually converges to the standard EM algorithm. This coordinated deformation is intended to stabilize learning under parameter constraints. However, because the hyperparameters $(\xi, r)$ are updated on a fixed schedule, mismatches between them may lead to suboptimal coordination, potentially causing instability or misalignment in the learning process.

\subsection{Adaptive DHEM}
We propose an adaptive scheme for selecting the barrier parameter, with a particular focus on preserving the monotonicity of the observed-data likelihood.
For the standard EM algorithm, a monotonic increase in the observed-data likelihood $l_o(\theta)$ and convergence to stationary points are well established \citep{wuConvergencePropertiesEM1983}. However, such guarantees typically do not extend to EM variants or constrained settings. Except in special cases involving specific models and constraint structures \citep{ingrassiaConstrainedMonotoneEM2007}, the monotonicity of $l_o(\theta)$ is not guaranteed. This limitation also applies to the DHEM algorithm.
To ensure monotonicity, we propose an adaptive DHEM algorithm that incorporates an explicit acceptance rule. In this framework, a candidate update is accepted only if the observed-data likelihood $l_o(\theta)$ increases and the parameter constraints are satisfied. By enforcing improvement in $l_o(\theta)$ at each iteration, the adaptive DHEM guarantees that the observed-data likelihood function, $l_o(\theta)$, is monotonically increasing. Additionally, rather than using a fixed schedule for the barrier parameter, we employ an adaptive update rule that reduces the burden of hyperparameter tuning in practice. Moreover, such adaptive strategies are well established in the optimization literature, particularly in interior-point methods, where the barrier parameter is dynamically adjusted using optimality and feasibility measures \citep{nocedal2009adaptive,lange1994adaptive}. With respect to the annealing parameter $r$, if the monotonicity condition is not satisfied, the procedure may terminate before $r$ reaches its final value of 1.

\subsubsection{Sufficient Condition for Monotonicity}
In this section, we derive a sufficient condition for monotonicity of DHEM. To this end, we note that the DAEM objective function, $Q_r(\theta|\theta_0)$, admits the following decomposition, 
\begin{align*}
    Q_r(\theta|\theta_0)
    &= \int \log P(X_o,X_m|\theta)\,P_r(X_m|X_o,\theta_0)\,dX_m \\
    &= \log P(X_o|\theta)
       + \int \log \bigg(
           \frac{P(X_m|X_o,\theta)}{P_r(X_m|X_o,\theta_0)}
         \bigg)
         P_r(X_m|X_o,\theta_0)\,dX_m
       - H(P_r,\theta_0) \\
    &= l_o(\theta)
       - D_{KL}(P_r,\theta_0 \| P,\theta)
       - H(P_r,\theta_0).
\end{align*}
Letting  $\Delta l_o(\theta,\theta_0)= l_o(\theta)-l_o(\theta_0)$, $~\Delta Q_r(\theta,\theta_0)= Q_r(\theta|\theta_0)-Q_r(\theta_0|\theta_0)$, and 
\[
    \Delta D_{KL}(\theta_0 \| \theta, r)= \int \log\Bigg(\frac{P(X_m|X_o,\theta_0)}{P(X_m|X_o,\theta)}\Bigg)P_r(X_m|X_o,\theta_0)\,dX_m,
\]
we have 
\begin{equation}
\label{deltalo}
\Delta l_o(\theta,\theta_0)
=
\Delta Q_r(\theta,\theta_0)
+
\Delta D_{KL}(\theta_0 \| \theta, r).
\end{equation}
%
%
For monotonicity of the observed-data log-likelihood $l_o$, it is required that $\Delta l_o(\theta,\theta_0)\ge0$.

For $r\neq1$, $\Delta Q_r(\theta,\theta_0)\ge0$ still holds by the M-step of DAEM, but the sign of $\Delta D_{KL}(\theta_0\|\theta,r)$ cannot be determined in general. Therefore, the monotonicity of $l_o(\theta)$ is no longer guaranteed. By applying the decomposition in (\ref{deltalo}) to the DHEM objective function of (\ref{dhem}), and defining $\Delta B(\theta,\theta_0)=B(\theta)-B(\theta_0)$, the change in the observed-data likelihood under DHEM can be expressed as
\begin{equation}
\label{deltalodhem}
\Delta l_o(\theta,\theta_0)=\Delta BQ_{r,\xi}(\theta,\theta_0)+\Delta D_{KL}(\theta_0\|\theta,r)-\xi\,\Delta B(\theta,\theta_0). 
\end{equation}
In the M-step of DHEM, $\Delta BQ_{r,\xi}(\theta,\theta_0)\ge0$ is guaranteed. However, monotonicity also requires that the remaining terms be nonnegative, and their signs cannot be determined in general. Searching globally over all triples $(\theta,r,\xi)$ that satisfy both conditions is difficult and computationally expensive. Thus, from the result of (\ref{deltalodhem}), we derive the following sufficient condition that defines a subset of triples ensuring monotonicity, 
\begin{align}\label{eq:suffcondition}
\Delta D_{KL}(\theta_0\|\theta,r)
\;\ge\;
\xi\,|\Delta B(\theta,\theta_0)|. 
\end{align}
Hence, DHEM ensures monotonicity by identifying a triple $(\theta,r,\xi)$ that satisfies (\ref{eq:suffcondition}). To construct an explicit procedure, we first verify that $\Delta D_{KL}(\theta_0\|\theta,r)$ can be made sufficiently positive.

\begin{thm}\label{thm:diffKLlowerbound}
Suppose that $\Delta D_{KL}(\theta_0\|\theta,r)$ is continuous in $r$. For any $\theta_0\neq\theta_1$, there exist $r_1\in(0,1]$ and $\delta(\theta_0,\theta_1)>0$ such that
\[
\Delta D_{KL}(\theta_0\|\theta_1,r)
>
\delta(\theta_1,\theta_0)
\]
for all $r\in[r_1,1]$.
\end{thm}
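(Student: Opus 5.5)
The plan is to evaluate $\Delta D_{KL}(\theta_0\|\theta_1,r)$ at the endpoint $r=1$ and then use the assumed continuity in $r$ to extend strict positivity to an interval $[r_1,1]$. At $r=1$ the annealed posterior $P_r(X_m|X_o,\theta_0)$ reduces to the standard posterior $P(X_m|X_o,\theta_0)$, so
\[
\Delta D_{KL}(\theta_0\|\theta_1,1)=\int \log\!\left(\frac{P(X_m|X_o,\theta_0)}{P(X_m|X_o,\theta_1)}\right)P(X_m|X_o,\theta_0)\,dX_m = D_{KL}(P,\theta_0\,\|\,P,\theta_1),
\]
which is an ordinary Kullback--Leibler divergence between the two conditional latent distributions. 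By Gibbs' inequality it is nonnegative. It vanishes if and only if $P(\cdot|X_o,\theta_0)=P(\cdot|X_o,\theta_1)$ almost everywhere.

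The second step is to make this value strictly positive. The hypothesis $\theta_0\neq\theta_1$ must be read as saying that the two parameters induce different conditional latent distributions. This is the natural identifiability requirement at the level of $P(X_m|X_o,\theta)$, which the paper already alludes to when discussing latent effects. I expect this to be the main obstacle: strictly speaking, distinct parameters could induce identical posteriors, for example under label switching. I would therefore state this identifiability assumption explicitly in the proof. Under it, $\kappa:=D_{KL}(P,\theta_0\|P,\theta_1)>0$.

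The third step is the continuity argument. The map $r\mapsto \Delta D_{KL}(\theta_0\|\theta_1,r)$ is continuous by hypothesis, and its value at $r=1$ is $\kappa>0$. Taking $\varepsilon=\kappa/2$, continuity at $r=1$ (one-sided, from the left) yields $\eta>0$ such that
\[
|\Delta D_{KL}(\theta_0\|\theta_1,r)-\kappa|<\kappa/2 \quad\text{for all } r\in(1-\eta,1].
\]
Set $r_1=\max\{1-\eta/2,\,1/2\}\in(0,1]$ and $\delta(\theta_1,\theta_0)=\kappa/2$. Then for every $r\in[r_1,1]$ we get $\Delta D_{KL}(\theta_0\|\theta_1,r)>\kappa/2=\delta$, which is the claim. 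Note that $\delta$ depends only on the pair $(\theta_0,\theta_1)$, as the statement requires.

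Finally, I would add a remark on why the argument is local to $r=1$. For $r<1$ the integrand is weighted by $P_r$ rather than by $P(\cdot|\theta_0)$, so $\Delta D_{KL}$ is no longer a divergence and may be negative. This is exactly why only a neighborhood $[r_1,1]$ can be guaranteed. If one wants to justify the continuity hypothesis itself, dominated convergence suffices: $P_r$ is continuous in $r$, and an integrable bound on $|\log(P(\cdot|\theta_0)/P(\cdot|\theta_1))|\,\sup_{r\in[1/2,1]}P_r$ gives continuity. However, since continuity is assumed in the statement, I would only mention this in passing.
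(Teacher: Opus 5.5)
Your proof is correct and follows the paper's route: identify the $r=1$ value as $D_{KL}(P,\theta_0\|P,\theta_1)$, then extend positivity to $[r_1,1]$ by continuity. The paper differs only in re-deriving the limit $r\to1$ by dominated convergence (Assumption (4) plus the domination lemma for $P_r$) instead of invoking the continuity hypothesis, and in leaving your $\varepsilon=\kappa/2$ step implicit. Your identifiability caveat is well placed, since the paper simply asserts $D_{KL}(P,\theta_0\|P,\theta_1)>0$, which fails e.g.\ when the posterior does not depend on $\theta$ (ZIP data with no zeros); label switching is not an instance, however, because it permutes the posterior over labels rather than leaving it unchanged.
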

A proof of \Cref{thm:diffKLlowerbound} is provided in the Appendix. By continuity, the divergence term converges to $D_{KL}(\theta_0\|\theta_1)$ as $r\to1$. Accordingly, the lower bound can be chosen as  
\[
\delta(\theta_1,\theta_0)=\eta\,D_{KL}(\theta_0\|\theta_1),\quad \eta\in[0,1).
\]
Thus, by the inequality in (\ref{eq:suffcondition}), for sufficiently large $r$, the term $\Delta D_{KL}(\theta_0\|\theta_1,r)$ can be made positive. Moreover, the barrier term $\xi|\Delta B(\theta_1,\theta_0)|$ can be controlled by decreasing $\xi$. Therefore, a pair $(r,\xi)$ can be chosen to satisfy the sufficient condition in (\ref{eq:suffcondition}), which guarantees a monotonic increase in the observed-data likelihood.

\subsubsection{Practical Implementation of Sufficient Condition}

The sufficient condition is enforced by a specific procedure for selecting the triple $(\theta, r, \xi)$. First, a candidate pair $(\theta_{\mathrm{cand}}, r)$ is obtained via the M-step of DHEM. The candidate then undergoes two acceptance rules. The first rule concerns the annealing parameter $r$,
\begin{align*}
    \text{(1st acceptance rule)}\quad
    \Delta D_{KL}(\theta_0\|\theta_{\mathrm{cand}},r)
    \ge
    \delta(\theta_{\mathrm{cand}},\theta_0).
\end{align*}
According to \Cref{thm:diffKLlowerbound}, there exists a sufficiently large range of $r$ for which this condition holds. If the rule is not satisfied at the current $r$, the algorithm increases $r$ and repeats the step. Once the first acceptance rule is satisfied, the second rule must be verified,
\begin{align*}
    \text{(2nd acceptance rule)}\quad
    \delta(\theta_{\mathrm{cand}},\theta_0)
    \ge
    \xi\,|\Delta B(\theta_{\mathrm{cand}},\theta_0)|.
\end{align*}
If this condition fails, the barrier parameter is reduced to
\[
\xi'
=
\min\!\left\{
\xi,\,
\frac{\delta(\theta_0,\theta_{\mathrm{cand}})}
     {|\Delta B(\theta_0,\theta_{\mathrm{cand}})|}
\right\}.
\]
If both acceptance rules are satisfied, the triple $(\theta_{\mathrm{cand}},r,\xi')$ ensures the monotonicity of $l_o$. The update is then accepted by setting $\theta_1=\theta_{\mathrm{cand}}$, and the algorithm continues with $(\theta_1,r,\xi')$ as a new state. Therefore, each accepted update meets the sufficient condition for a monotone increase in the observed-data likelihood.
In summary, although the monotonicity of the barrier-based constrained EM can be characterized theoretically, directly searching over $(\theta,r,\xi)$ is impractical. The proposed adaptive DHEM algorithm replaces this global search with a sequential two-stage acceptance process. By enforcing sufficient conditions step by step, it guarantees monotonicity while remaining computationally feasible.

\subsubsection{Global Convergence Theorem with Adaptive DHEM}

In the previous section, we established sufficient conditions for the monotonicity of the observed-data log-likelihood and introduced the corresponding acceptance rules. We now show that the proposed adaptive DHEM algorithm satisfies the conditions of Zangwill’s global convergence theorem (GCT) \citep{zangwill1969nonlinear}, thereby ensuring monotonic increase in the observed-data likelihood and convergence to a stationary point under constraints. Specifically, we verify that the two acceptance rules, which adaptively update the triple $(\theta,r,\xi)$, induce a set-valued mapping whose projection onto the parameter component $\theta$ satisfies the GCT conditions.

We now introduce the set-valued maps as follows. The GEM update, for a fixed $(r,\xi)$, defines
\begin{align*}
M(\theta,r,\xi)
&=
\Big\{
\theta' :
BQ_{r,\xi}(\theta' |\theta)
\ge
BQ_{r,\xi}(\theta |\theta)
\Big\}.
\end{align*}
The first acceptance rule, which establishes a lower bound on the divergence, induces a mapping
\begin{align*}
A_1(\theta,r)
&=
\Big\{
(\theta',r') :
\Delta D_{KL}(\theta \| \theta', r') \ge \delta(\theta',\theta),
\ r' \ge r
\Big\}.
\end{align*}
The second acceptance rule, which controls the barrier contribution, yields
\begin{align*}
A_2(\theta,\xi)
&=
\Big\{
(\theta',\xi') :
\delta(\theta',\theta) \ge \xi' |\Delta B(\theta',\theta)|,
\ \xi' \le \xi
\Big\}.
\end{align*}
Combining these components, the full adaptive DHEM update is represented by a set-valued map,
\[
T(\theta,r,\xi)
=
\Big\{
(\theta',r',\xi') :
\theta' \in M(\theta,r',\xi'),\,
(\theta',r') \in A_1(\theta,r),\,
(\theta',\xi') \in A_2(\theta,\xi)
\Big\}.
\]
\begin{thm}[Global convergence of adaptive DHEM]
\label{thm:gct_adaptive_dhem}
Suppose that the following conditions hold: (i) The parameter space $\Theta = \{\theta | l_o(\theta) \ge l_o(\theta_0)\}$ is compact, and $r \in [r_0,1]$ and $\xi \in [0,\xi_0]$ are confined to compact intervals. (ii) The function $BQ_{r,\xi}(\theta'|\theta)$ is continuous in $(\theta',\theta,r,\xi)$. (iii) The divergence $\Delta D_{KL}(\theta \| \theta', r)$ is continuous in $(\theta,\theta',r)$. (iv) The function $\delta(\theta',\theta)$ is continuous in $(\theta',\theta)$. (v) The barrier difference $\Delta B(\theta',\theta)$ is continuous in $(\theta',\theta)$. 
Define the solution set as
$
\Gamma=\{\theta \in \Theta | 0 \in \partial l_o(\theta)\}.
$
Then, the following statements hold:
\begin{enumerate}
    \item The mapping $T(\theta,r,\xi)$ is closed on the compact domain $\Theta \times [r_0,1] \times [0,\xi_0]$.
    \item The induced projection map, 
    $
    T_\Theta(\theta,r,\xi)=\{\theta' : (\theta',r',\xi') \in T(\theta,r,\xi)\},
    $
    is also closed and shares the same solution set $\Gamma$.
    \item The observed-data log-likelihood $l_o(\theta)$ serves as an ascent function, i.e., for any $\theta' \in T_\Theta(\theta,r,\xi)$,
    $
    l_o(\theta') \ge l_o(\theta).
    $
\end{enumerate}
Therefore, any sequence $\{\theta^{(t)}\}$ generated by the adaptive DHEM algorithm satisfies
\[
l_o(\theta^{(t+1)}) \ge l_o(\theta^{(t)}),
\]
and every limit point of $\{\theta^{(t)}\}$ belongs to $\Gamma$. In particular, the sequence converges to a stationary point of $l_o(\theta)$ under the given constraints.
\end{thm}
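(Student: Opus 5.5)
The plan is to verify, one at a time, the three hypotheses of Zangwill's global convergence theorem for the projected map $T_\Theta$, and then read off the conclusion from that theorem. The hypotheses are compactness of the iterates, closedness of the algorithmic map outside $\Gamma$, and a strict ascent function outside $\Gamma$.

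\textbf{Step 1: compactness.} By condition (i), $\Theta\times[r_0,1]\times[0,\xi_0]$ is a product of compact sets, hence compact. Ascent (Step 3) will show that every iterate stays in the level set $\Theta=\{\theta: l_o(\theta)\ge l_o(\theta_0)\}$. So the whole sequence $(\theta^{(t)},r^{(t)},\xi^{(t)})$ lies in this compact domain.

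\textbf{Step 2: closedness of $T$.} Take sequences $(\theta_k,r_k,\xi_k)\to(\theta,r,\xi)$ and $(\theta'_k,r'_k,\xi'_k)\in T(\theta_k,r_k,\xi_k)$ with $(\theta'_k,r'_k,\xi'_k)\to(\theta',r',\xi')$. Each defining relation of $T$ is a non-strict inequality between functions that are jointly continuous in all arguments:
\begin{itemize}
\item $BQ_{r'_k,\xi'_k}(\theta'_k|\theta_k)\ge BQ_{r'_k,\xi'_k}(\theta_k|\theta_k)$, continuous by (ii);
\item $\Delta D_{KL}(\theta_k\|\theta'_k,r'_k)\ge\delta(\theta'_k,\theta_k)$, continuous by (iii) and (iv);
\item $\delta(\theta'_k,\theta_k)\ge\xi'_k|\Delta B(\theta'_k,\theta_k)|$, continuous by (iv) and (v);
\item $r'_k\ge r_k$ and $\xi'_k\le\xi_k$.
\end{itemize}
Passing to the limit preserves every non-strict inequality, so $(\theta',r',\xi')\in T(\theta,r,\xi)$ and the graph of $T$ is closed.

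For the projection $T_\Theta$, take $\theta'_k\in T_\Theta(\theta_k,r_k,\xi_k)$ with $\theta'_k\to\theta'$. Choose witnesses $(r'_k,\xi'_k)$ in the compact set $[r_0,1]\times[0,\xi_0]$ and extract a convergent subsequence. Closedness of $T$ then gives $\theta'\in T_\Theta(\theta,r,\xi)$. This compactness-of-fibers argument is the standard reason projections of closed maps with compact range remain closed.

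That $T_\Theta$ "shares the solution set $\Gamma$" I would argue via fixed points. If $\theta$ is a fixed point, i.e.\ $\theta\in T_\Theta$ with equality of $BQ$, then $\theta$ maximizes $BQ_{r,\xi}(\cdot|\theta)$. The gradient identity from (\ref{deltalodhem}) then gives
\[
\nabla l_o(\theta)=\nabla BQ_{r,\xi}(\theta|\theta)-\xi\nabla B(\theta)+\nabla_\theta\Delta D_{KL}.
\]
At a fixed point with $\xi\to0$ and $r\to1$, the last two terms vanish, because the KL term is minimized at $\theta'=\theta$ when $r=1$. Hence $0\in\partial l_o(\theta)$.

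\textbf{Step 3: ascent.} By (\ref{deltalodhem}),
\[
\Delta l_o=\Delta BQ_{r',\xi'}+\Delta D_{KL}-\xi'\Delta B.
\]
The first term is $\ge 0$ by membership in $M$. The remaining two terms satisfy $\Delta D_{KL}\ge\delta\ge\xi'|\Delta B|\ge\xi'\Delta B$ by $A_1$ and $A_2$. So $l_o(\theta')\ge l_o(\theta)$.

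To get strict ascent off $\Gamma$, I would use the choice $\delta=\eta D_{KL}(\theta\|\theta')$, which is $>0$ whenever $\theta'\neq\theta$. This gives $\Delta D_{KL}-\xi'\Delta B>0$, hence $l_o(\theta')>l_o(\theta)$ for any genuine move. For $\theta\notin\Gamma$, a GEM step with small $\xi$ and $r$ near $1$ (which exist by Theorem~\ref{thm:diffKLlowerbound}) moves $\theta$. Zangwill's theorem then yields that every limit point lies in $\Gamma$.

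\textbf{Main obstacle.} The delicate step is the solution-set and strict-ascent part of Step 2 and Step 3. Closedness and monotone ascent follow mechanically from continuity, but showing that stationarity of $l_o$ is exactly the fixed-point set needs the barrier and annealing perturbations to vanish in the limit: $\xi^{(t)}\to0$, or at least $\xi\nabla B\to0$, and $r^{(t)}\to1$. I would first try to impose or derive this from the adaptive rule. Failing that, I would state the conclusion relative to the limiting hyperparameters.
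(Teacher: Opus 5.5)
Your compactness step, the closedness of $T$ and of $T_\Theta$, and the non-strict ascent in Step 3 are correct. They match items 1--3 of the statement, which is the whole of the paper's verification before it invokes Zangwill. The gap is in the part you flag, and no better argument can close it, because with $T$ as defined Zangwill's strict-ascent hypothesis is false. Take $\theta'=\theta$, $r'=r$, $\xi'=\xi$. Since $\Delta D_{KL}(\theta\|\theta,r)=0$, $\delta(\theta,\theta)=\eta D_{KL}(\theta\|\theta)=0$ and $\Delta B(\theta,\theta)=0$, every defining inequality of $T$ holds with equality. Hence $\theta\in T_\Theta(\theta,r,\xi)$ for \emph{every} $\theta$. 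Zangwill requires $l_o(\theta')>l_o(\theta)$ for \emph{all} $\theta'\in T_\Theta(\theta,r,\xi)$ whenever $\theta\notin\Gamma$. Here the constant sequence $\theta^{(t)}\equiv\theta_0$ is an admissible trajectory, so ``every limit point lies in $\Gamma$'' cannot follow from (i)--(v).

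Your repair shows only that \emph{some} GEM step moves $\theta$, which is an existence statement and not strict ascent over the whole image. Even for genuine moves the claimed strictness fails. $A_1$ and $A_2$ give only $\Delta D_{KL}-\xi'\Delta B\ge\delta-\xi'|\Delta B|\ge 0$. Equality holds when $A_1$ is tight, $\Delta B>0$ and $\xi'=\delta/|\Delta B|$, which is exactly the value the adaptive rule sets when $A_2$ fails. Moreover, $\delta>0$ would need $\eta>0$ and $P(\cdot|X_o,\theta)\neq P(\cdot|X_o,\theta')$, and neither is assumed. The paper's item 3 is also only non-strict, so there is nothing there to import.

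The solution-set step has the same defect. For a GEM map, $\theta\in M(\theta,r,\xi)$ with equality of $BQ$ holds trivially, so it says nothing about $\theta$ maximizing $BQ_{r,\xi}(\cdot|\theta)$. Even at a genuine interior maximizer, your identity gives
$\nabla l_o(\theta)=-\xi\nabla B(\theta)-\int\nabla\log P(X_m|X_o,\theta)\,P_r(X_m|X_o,\theta)\,dX_m$.
This vanishes in general only when $\xi\nabla B(\theta)=0$ and $r=1$. The monotone hyperparameters converge to some $(r^*,\xi^*)$, but nothing in the acceptance rules forces $r^*=1$ or $\xi^*=0$. The paper itself notes the procedure may stop early, e.g.\ at $r\approx 0.954$ in the Weibull example. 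So the natural fixed points satisfy $\nabla_{\theta'}BQ_{r^*,\xi^*}(\theta'|\theta)|_{\theta'=\theta}=0$, not $\nabla l_o(\theta)=0$.

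A correct result needs a modified map or conclusion. One option is to impose strict increase of $l_o$ off the solution set as a hypothesis on the generated sequence, as in Wu's GEM theorem. Another is to use the argmax map and take $\Gamma$ to be its fixed-point set at $(r^*,\xi^*)$. Either way, you need separate conditions ensuring $r^{(t)}\to 1$ and $\xi^{(t)}\nabla B(\theta^{(t)})\to 0$ if you want stationarity of $l_o$ itself. Finally, Zangwill only locates limit points. The closing claim that the whole sequence converges needs an extra argument, e.g.\ isolated points of $\Gamma$ together with $\|\theta^{(t+1)}-\theta^{(t)}\|\to 0$, and your proposal does not address it.
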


\subsubsection{Initial Setting of Hyperparameters}
In the proposed method, the initial settings of the two hyperparameters, namely the annealing parameter $r$ and the barrier parameter $\xi$, are crucial to learning performance. Because $r\in[0,1]$, the annealing parameter space is relatively limited. According to \Cref{thm:reducingLatentEffect}, it is desirable to start with $r<0.5$ to effectively reduce the latent effect. In the subsequent data analysis, we set $r_{\texttt{init}}=0.1$. 

In contrast, the barrier parameter takes values in $[0,\infty)$, and its appropriate magnitude depends on the specific form of the barrier function $B(\theta)$, which necessitates a principled guideline. 
The key principle for selecting $\xi_{\texttt{init}}$ is that, at the initial point $\theta_0$, the learning direction should be primarily governed by the surrogate function $Q_r(\theta|\theta_0)$ rather than the barrier term. That is, in $BQ_{r,\xi}(\theta|\theta_0)=Q_r(\theta|\theta_0)+\xi B(\theta)$, the initial gradient should be dominated by $\nabla Q_r(\theta_0|\theta_0)$ rather than $\xi\nabla B(\theta_0)$. This requirement can be expressed as $||\nabla Q_r(\theta_0|\theta_0)||>\xi_{\texttt{init}}\cdot||\nabla B(\theta_0)||$, and selecting $\xi_{\texttt{init}}$ within this range ensures that the learning proceeds in an appropriate direction at the initial stage. Moreover, since $\theta_0$ is chosen within the feasible region, $\xi_{\texttt{init}}$ can be determined afterward, making this criterion practically applicable.
More specifically, by introducing $\tau\in(0,1)$, one can directly control the dominance level of $Q_r$, which determines the relative influence of the surrogate term in the initial learning phase. Accordingly, we impose the condition $||\nabla B(\theta_0)||\cdot\xi_{\texttt{init}} \le \tau\cdot||\nabla Q_r(\theta_0|\theta_0)||$, which leads to 
\[
\xi_{\texttt{init}}=\tau \frac{||\nabla Q_r(\theta_0|\theta_0)||}{||\nabla B(\theta_0)||}. 
\]
This choice prevents the barrier term from exerting excessive influence while allowing the constraint to be incorporated progressively. Moreover, this formulation defines $\xi_{\texttt{init}}$ as a quantity jointly driven by the data, the barrier function $B(\theta)$, and the initial parameter $\theta_0$.

\section{Numerical Experiments}\label{sec:DataAnalysis}

To evaluate the performance of the proposed adaptive DHEM algorithms, we conduct two simulation studies and a real-data analysis. The simulation studies use Gaussian mixture models (GMMs) and zero-inflated Poisson (ZIP) models, and the real-data analysis examines a Weibull mixture model to capture bathtub-shaped hazard rates. These models are representative of mixture-based frameworks, which are well known for issues such as identifiability and component ambiguity \citep{teicher1963identifiability}. In all cases, we compare five methods, including the proposed ones: standard EM (EM), deterministic annealing EM (DAEM), barrier method EM, dual-homotopy EM (DHEM), and adaptive DHEM.

\subsection{Gaussian Mixture Model}

Gaussian mixture models are a classic application of the EM algorithm, yet they are prone to issues such as component collapse and covariance singularity during estimation. These problems are exacerbated in higher dimensions or with poor initialization, causing unstable estimates and making parameters less identifiable. In this study, we address these issues by ensuring sufficient separation between component means, aligning with the statistical need for mixture components to represent distinct clusters. The identifiability of mixture models has been widely discussed in the literature \citep{chenOptimalRateConvergence1995,teicher1963identifiability}. Building on this, we incorporate component-wise distances via a barrier formulation to enforce separation. 

The underlying assumption that the model consists of $K$ distinct components induces an implicit feasible parameter space, 
$
\mathcal{M}=
\mathbb{R}^{K\cdot d}
\cap
\left\{
(\mu_1,\ldots,\mu_K)
: |\mu_k - \mu_l| > \epsilon  \text{ for all } k \neq l
\right\}, 
$
which requires the component means to be mutually distinct.
Early work on constrained Gaussian mixture models has largely focused on preventing covariance singularity \citep{hathawayConstrainedEmAlgorithm1986, ingrassiaLikelihoodbasedConstrainedAlgorithm2004, ingrassiaConstrainedMonotoneEM2007}. More recently, similar ideas have been explored using penalty-based approaches \citep{manoleEstimatingNumberComponents2021}. In this work, we adopt a barrier formulation that imposes constraints based on Mahalanobis distances between component means and implements them via a log-barrier function.
%
With deterministic annealing and a log-barrier function, the objective function of GMM has the form, 
\[
BQ_{r,\xi}(\theta|\theta^{(t)})=Q_r(\theta|\theta^{(t)})+\xi B(\theta),
\]
for $\theta=\{(\pi_k,\mu_k,\Sigma_k): k=1,\cdots,K\}$, where 
\[
Q_r(\theta|\theta_0)=\sum_{i=1}^N \sum_{k=1}^K
Z_{ik}^{(r)}(\theta_0)
\big(
\log \pi_k + \log \mathcal N(x_i|\mu_k,\Sigma_k)
\big),\\
\]
and $Z_{ik}^{(r)}(\theta_0)$ denotes the annealed responsibility. The constrained objective function is defined as $BQ_{r,\xi}(\theta|\theta^{(t)})$ with a log-barrier function $B(\theta)$ to enforce a minimum separation between component means.

Unlike conventional barrier methods, we use a component-wise barrier formulation. Specifically, for updating $\mu_k$, we define the component-wise barrier as 
\[
B^{[k]}(\theta)=\log\big(d_k^2(\mu_k)-\delta\big),
\]
where 
$
d_k^2(\mu_k)=\sum_{l\neq k}(\mu_k-\mu_l)^\top \Sigma_k^{-1}(\mu_k-\mu_l),
$
and $k$ denotes the index of the component currently being updated. During the update of the $k$th component, the barrier term $B^{[k]}(\theta)$ is used, while the other components are treated as fixed. The parameter update proceeds sequentially, first updating $\mu_k$ and then updating $\Sigma_k$. Specifically, for fixed \(r\) and \(\xi\), we first compute \(\mu_k\) by solving the first-order optimality condition of the barrier-augmented objective, and then update \(\Sigma_k\) in closed form using the updated means.

The update of \(\mu_k\) is obtained by solving the stationarity condition of \(BQ_{r,\xi}(\theta|\theta^{(t)})\). Let
$
N_k = \sum_{i=1}^N Z_{ik}^{(r)}(\theta^{(t)})
$
and 
$ 
s_k = \sum_{i=1}^N Z_{ik}^{(r)}(\theta^{(t)}) x_i.
$
Then, the first-order condition with respect to $\mu_k$ is given by
\begin{align*}
0=\frac{\partial}{\partial \mu_k} BQ_{r,\xi}(\theta|\theta^{(t)})&=\Sigma_k^{-1}\big(s_k - N_k \mu_k\big)+\xi \cdot\frac{2}{d_k^2(\mu_k)-\delta}\sum_{l\neq k}\Sigma_k^{-1}(\mu_k-\mu_l)\\
&=\Sigma_k^{-1}\bigg( s_k - N_k \mu_k + \xi \cdot\frac{2}{d_k^2(\mu_k)-\delta}\sum_{l\neq k}(\mu_k-\mu_l)\bigg).
\end{align*}
Since $\Sigma_k^{-1}$ is positive definite, this condition is equivalent to
\[
0=s_k - N_k \mu_k + \xi \cdot\frac{2}{d_k^2(\mu_k)-\delta}\sum_{l\neq k}(\mu_k-\mu_l).
\]
To derive this equation, we introduce a component-wise barrier function. This nonlinear equation defines a candidate update $\mu_k^{\text{prop}}$, computed numerically via a Gauss-Seidel scheme; that is, each $\mu_k$ is updated sequentially while the remaining components $\{\mu_l\}_{l\ne k}$ are held fixed.

Given the updated means $\{\mu_k\}$, the covariance matrices are updated in closed form as
\begin{align*}
\Sigma_k=\frac{1}{N_k}\sum_{i=1}^NZ_{ik}^{(r)}(\theta^{(t)})\,(x_i-\mu_k)(x_i-\mu_k)^\top.
\end{align*}
The resulting parameters $\{\mu_k,\Sigma_k\}$ satisfy the generalized EM (GEM) condition,
\[
BQ_{r,\xi}(\theta^{(\text{prop})}|\theta^{(t)}) \ge BQ_{r,\xi}(\theta^{(t)}|\theta^{(t)}).
\]
However, the candidate $\{\mu_k^{\text{prop}}\}$ is not accepted directly. Although it satisfies the GEM condition, it may violate the feasibility constraint, whereas the previous parameter $\mu_k^{(t)}$ is guaranteed to be feasible. To address this, we employ a backtracking damping procedure. Specifically, we construct intermediate candidates
\[
\mu_k^{(\alpha)} = (1-\alpha)\mu_k^{(t)} + \alpha \mu_k^{\text{prop}}, \quad \alpha \in (0,1],
\]
and recompute $\Sigma_k^{(\alpha)}$ accordingly. Among these candidates, we select the first one that is feasible under the barrier constraint and satisfies the GEM condition.
Since the previous parameters $\{\mu_k^{(t)},\Sigma_k^{(t)}\}_{k=1,\cdots,K}$ are feasible, the backtracking procedure is likely to find candidates that are both feasible and satisfy the GEM condition. If no such candidate is found, the update is rejected, the previous parameter \(\theta^{(t)}\) is kept, and the procedure proceeds by increasing the annealing parameter \(r\) and repeating the update.

The initial barrier parameter \(\xi_{\mathrm{init}}\) in GMM is determined by the relative magnitudes of the surrogate and barrier gradients. Because we adopt a profiling approach with respect to \(\mu_k\), the derivative with respect to \(\mu_k\) can be computed explicitly, enabling a direct comparison of gradient magnitudes.
The first-order condition for \(\mu_k\) is given by
\[
0=s_k - N_k \mu_k+\xi \cdot\frac{2}{d_k^2(\mu_k)-\delta}\sum_{l\neq k}(\mu_k-\mu_l),
\]
which can be interpreted as the sum of the surrogate gradient term \(s_k - N_k \mu_k\) and the barrier-induced term. To ensure that early iterations are primarily driven by the surrogate objective \(Q_r\), we impose the following condition at the initial point \(\mu_k^{(0)}\), 
\[
\xi\Big\|\frac{2}{D_k(\mu_k^{(0)})}\,u_k(\mu_k^{(0)})\Big\|\le\tau \,\|s_k - N_k \mu_k^{(0)}\|,
\]
where \(u_k(\mu_k)=\sum_{l\neq k}(\mu_k-\mu_l)\), \(D_k(\mu_k)=d_k^2(\mu_k)-\delta\), and \(\tau\in(0,1)\). 
This leads to the following initialization, 
\[
\xi_{\mathrm{init}}=\frac{\tau}{2}\min_{k}\Big\{\frac{\|s_k - N_k \mu_k^{(0)}\|\,D_k(\mu_k^{(0)})}{\|u_k(\mu_k^{(0)})\|}\Big\}.
\]

We conduct 500 simulation replications and compare five algorithms: EM, DAEM, barrier-method EM, DHEM, and adaptive DHEM, using the same initialization and convergence criteria across all methods. %
%
%
The true parameters used in the simulation are $n=100$, 
$\pi=(0.2,0.3,0.5)$,
$\mu_1=(-1,1,2)^\top$,
$\mu_2=(1,1,0.5)^\top$, and
$\mu_3=(2,0,-2)^\top$.
The covariance matrices are specified to have distinct non-diagonal structures, given by
\[
\Sigma_1=\begin{pmatrix}
1&0.3&-0.2\\
0.3&1&0.1\\
-0.2&0.1&1
\end{pmatrix},\ 
\Sigma_2=\begin{pmatrix}
2&0.18&-0.25\\
0.18&0.1&0.06\\
-0.25&0.06&0.5
\end{pmatrix},\ 
\Sigma_3=\begin{pmatrix}
0.5&-0.08&0.22\\
-0.08&0.1&-0.12\\
0.22&-0.12&2
\end{pmatrix}.
\]


\begin{table}[!h]
\centering
\caption{Averaged estimation error values (their standard deviations) for $\pi$, $\mu$, $\Sigma$ by five methods.}
        \vspace{-3mm}
\label{tab:gmm_err_mean_sd}
{\small
\begin{tabular}{lcccc}
\toprule
Method & $\pi$ & $\mu$ & $\Sigma$ & success \\
\midrule
Adaptive DHEM & 0.066 (0.044) & 0.589 (0.310) & 1.376 (0.918) & 0.960 \\
Barrier & 0.121 (0.081) & 0.933 (0.565) & 1.623 (1.012) & 0.880 \\
DAEM & 0.100 (0.031) & 1.625 (0.694) & 3.276 (1.454) & 1.000 \\
DHEM & 0.099 (0.065) & 1.079 (0.626) & 2.494 (1.502) & 1.000 \\
EM & 0.121 (0.081) & 0.933 (0.562) & 1.623 (1.002) & 0.900 \\
\bottomrule
\end{tabular}}
\end{table}

Estimation accuracy is evaluated using the $L_1$ norm for mixture weights, the $L_2$ norm for component means, and the Frobenius norm for covariance matrices. 
%
\Cref{tab:gmm_err_mean_sd} presents the average estimation errors for $\pi$, $\mu$, and $\Sigma$, along with their standard deviations, and the success rates of parameter convergence. As shown, the standard EM algorithm and the Barrier EM algorithm often exhibit unstable behavior, reflected in their relatively low success rates. The standard EM algorithm suffers from component collapse, while the Barrier EM algorithm shows unstable variance due to significant distortion of the latent structure.
Because the initial parameters are randomly initialized, there is a risk of convergence failure across all methods. Among them, DAEM exhibits the most stable convergence behavior, achieving a perfect success rate across replications. However, as discussed in \citet{yuConvergenceParameterSelection2018}, it suffers from a merging effect in which component means collapse toward one another, leading to relatively large estimation errors. 
The standard EM and Barrier EM algorithms are more prone to convergence failure due to distortion caused by latent variables, as reflected in their lower success rates. The DHEM algorithm shows intermediate performance because its effectiveness depends on a proper balance between the two hyperparameters; when this balance is not well maintained under a fixed schedule, performance may degrade. 
Overall, the adaptive DHEM consistently achieves both high accuracy and stable convergence, providing the most reliable performance among the compared methods.

\subsection{Zero-Inflated Poisson Model}
Zero-inflated Poisson (ZIP) models are commonly used for count data with many zeros, such as insurance claim counts or disease incidence data, and are typically estimated using the EM algorithm. In the ZIP model, the probability of structural zeros is determined by a zero-inflation parameter $\pi$, while the remaining observations follow a Poisson distribution with mean $\lambda$. 
The ZIP model is given by
\[
P(Y=0)=\pi+(1-\pi)e^{-\lambda}, \qquad
P(Y=y)=(1-\pi)\frac{\lambda^y e^{-\lambda}}{y!}, \quad y\ge1.
\]
When the Poisson mean $\lambda$ is very small and the zero-inflation probability $\pi$ is high, an identifiability problem may arise. In this case, excess zeros can be explained by both the Poisson and zero-inflation components, making it difficult to distinguish between these mechanisms and leading to potential ambiguity in parameter estimation \citep{klein2015bayesian}.

In real-world applications such as insurance data, the proportion of zero-inflated observations is often high, while the Poisson event rate is very low, reflecting the aforementioned setting and leading to the same identifiability issue. Consequently, prior research or domain knowledge often sets a lower bound on $\pi$ to incorporate prior information.
To reflect this, we impose a constraint on $\pi$ and implement it via a log-barrier function. To avoid imposing overly strong prior information relative to $\pi_{\texttt{true}}$, we adopt a relatively mild constraint. Although it is known that $\pi_{\texttt{true}}$ is high, we impose only a weakly informative lower bound. For example, the constraint $\pi>0.5$ can be enforced with the barrier $B(\pi)=\log(\pi-0.5)$, which prevents $\pi$ from being driven toward unrealistically small values during estimation without overly favoring the true value. This induces a constrained parameter space, $\Theta=\left\{ (\pi,\lambda) \in (\pi_{\min},1)\times(0,\infty) \right\}$. 

For the ZIP model, the initial barrier parameter \(\xi_{\mathrm{init}}\) is selected by comparing the surrogate and barrier gradients with respect to \(\pi\). To enforce the constraint \(\pi>\pi_{\min}\), we use the log-barrier, 
$
B(\pi)=\log(\pi-\pi_{\min}).  
$
Then, the surrogate gradient is given by
\[
\frac{\partial Q_r}{\partial \pi}=\frac{\sum_i (1-Z_i^{(0)})}{\pi}-\frac{\sum_i Z_i^{(0)}}{1-\pi}.
\]
To ensure that the initial updates are primarily driven by the surrogate objective \(Q_r\), we impose the condition
\[
\xi \Big|\frac{1}{\pi^{(0)}-\pi_{\min}}\Big|\le\tau \Big|\frac{\partial Q_r}{\partial \pi}(\pi^{(0)})\Big|,
\]
which yields
\[
\xi_{\mathrm{init}}=\tau\Big|\frac{\partial Q_r}{\partial \pi}(\pi^{(0)})\Big|(\pi^{(0)}-\pi_{\min}).
\]

The simulation study is repeated 200 times with a sample size of 10,000 in each repetition. The true parameters are set to $\pi_{\texttt{true}}=0.99$ and $\lambda_{\texttt{true}}=0.3$, while the initial values are fixed at $\pi_{\texttt{init}}=0.7$ and $\lambda_{\texttt{init}}=1$. For the hyperparameter, the annealing parameter is initialized at $r_{\texttt{init}}=0.1$. The barrier function is defined as $\log(\pi-0.5)$. 
%
\begin{figure}
    \centering
    \begin{subfigure}[t]{0.49\textwidth}
        \centering
        \includegraphics[width=\textwidth]{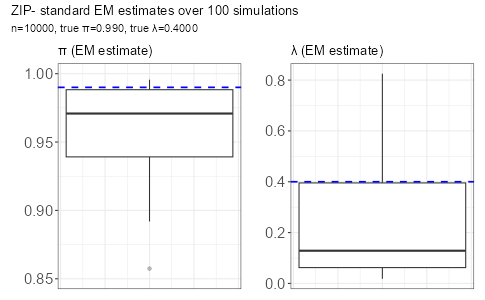}
              \caption{Standard EM}
        \label{fig:ZIP_stamdEMresult}
    \end{subfigure}
    \hfill
    \begin{subfigure}[t]{0.49\textwidth}
        \centering
        \includegraphics[width=\textwidth]{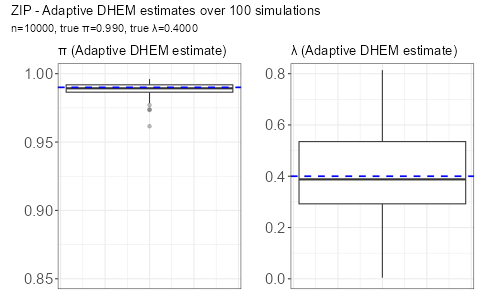}
        \caption{Adaptive DHEM}
        \label{fig:ZIP_adaptiveDHEMresult}
    \end{subfigure}
    \vspace{-5mm}
    \caption{Boxplots of estimated values for $\pi$ and $\lambda$ by standard EM and adaptive DHEM.}
    \label{fig:ZIP_EM_vs_adaptiveDHEM}
\end{figure}

\begin{table}[!h]
\centering
\caption{Averaged estimation error values (their standard deviations) for $\pi$, $\lambda$ by five methods.}
        \vspace{-3mm}
\label{tab:zip_err_mean_sd}
{\small
\begin{tabular}{lcc}
\toprule
Method & $\pi$  & $\lambda$   \\
\midrule
Adaptive DHEM & $-0.003\,(0.007)$ & $-0.010\,(0.123)$ \\
Barrier      & $-0.003\,(0.007)$ & $-0.010\,(0.123)$ \\
DAEM         & $-0.431\,(0.030)$ & $-0.293\,(0.002)$ \\
DHEM         & $-0.431\,(0.030)$ & $-0.293\,(0.002)$ \\
EM           & $\phantom{-}0.270\,(0.037)$ & $\phantom{-}3.171\,(0.274)$ \\
\bottomrule
\end{tabular}}
\end{table}

A comparison of the learning trajectories of the standard EM and the adaptive DHEM is shown in \Cref{fig:ZIP_EM_vs_adaptiveDHEM}, where the blue dashed lines indicate the true parameter values. Panel (a) shows that both $\lambda$ and $\pi$ converge to values significantly lower than their true values under the standard EM. In contrast, panel (b) shows that the adaptive DHEM prevents $\pi$ from dropping below the imposed lower bound. As a result, the estimates of $\pi$ remain bounded away from zero and stay close to the true parameter value, yielding more stable and interpretable estimates.
From \Cref{tab:zip_err_mean_sd}, the adaptive DHEM and the barrier method exhibit nearly identical performance, with matching estimation errors for both $\pi$ and $\lambda$. This indicates that the primary source of improvement is the barrier mechanism rather than the annealing component. In contrast, DAEM and DHEM perform nearly identically, but both perform worse than the barrier methods and adaptive DHEM.
This behavior stems from the ZIP model's simple latent structure. For $Y \ge 1$, the latent variable is deterministically assigned to the Poisson component ($Z = 0$), leaving little room for annealing to alter the latent distribution. Consequently, annealing introduces distortion without meaningful benefit, leading to inferior performance of DAEM and DHEM.

\subsection{Real Data Analysis: Bathtub Curve}

In reliability engineering, the `bathtub curve' is a canonical model for the life cycle of manufactured products; it is characterized by a hazard rate that initially decreases, remains nearly constant for a period, and then increases again. This pattern is often represented by a mixture of three Weibull distributions, corresponding to infant mortality, random failure, and wear-out failure. To capture this structure, the shape parameters of the Weibull components must satisfy the ordered constraints,
$
\beta_1 < 1,~ \beta_2 = 1,~ \beta_3 > 1.$
These constraints are more than numerical limits; they also reflect the physical meaning of each failure regime. 

Given the Weibull density
\begin{align*}
f(t)=\frac{\beta}{\alpha}\bigg(\frac{t}{\alpha}\bigg)^{\beta-1}\exp\!\big(-(t/\alpha)^\beta\big)=
\beta \lambda t^{\beta-1}\exp(-\lambda t^\beta),
\qquad \alpha>0,\ \beta>0, \ \lambda=\alpha^{-\beta},
\end{align*}
we consider a constrained EM estimation problem in which the shape parameters $\beta$ are subject to the ordering constraints. For analysis, we use a well-known real dataset exhibiting a bathtub-shaped hazard function, consisting of failure times of electronic devices in \citet{aarsetHowIdentifyBathtub1987}. The initial shape parameters are set to $\beta_{\texttt{init}}=(0.5,\,1,\,2)$, which satisfy the ordering constraints. The annealing parameter is initialized at $r_{\texttt{init}}=0.1$, and the initial barrier parameter \(\xi_{\mathrm{init}}\) is chosen by comparing the surrogate gradient and the barrier gradient with respect to the shape parameters. We consider constraints of the form \(\beta_1 \in (0,1)\) and \(\beta_3 > 1\), and introduce log-barrier functions,  
$
B(\beta_1)=\log \beta_1 + \log(1-\beta_1),~B(\beta_3)=\log(\beta_3-1).
$
The corresponding gradients are given by
$
\frac{\partial Q_r}{\partial \beta_j},~ j\in\{1,3\}.
$
To ensure that the initial updates are primarily driven by \(Q_r\), we impose
\[
\xi \Big|\frac{\partial B}{\partial \beta_j}(\beta_j^{(0)})\Big|\le\tau \Big|\frac{\partial Q_r}{\partial \beta_j}(\beta_j^{(0)})\Big|,\quad j\in\{1,3\},
\]
which yields the initialization, 
\[
\xi_{\mathrm{init}}=\tau\min\Big\{\Big|\frac{\partial Q_r}{\partial \beta_1}(\beta_1^{(0)})\Big|\min\big(\beta_1^{(0)},\,1-\beta_1^{(0)}\big),\;\Big|\frac{\partial Q_r}{\partial \beta_3}(\beta_3^{(0)})\Big|(\beta_3^{(0)}-1)\Big\}.
\]

\begin{table}
\centering
\caption{Converged parameter estimates obtained by five methods}
\vspace{-3mm}
\label{tab:converged_params_math_header}
\scriptsize
\setlength{\tabcolsep}{3pt}
\begin{tabular}{lrrrrrrrrrr}
\hline
Method 
& $\pi_1$ & $\pi_2$ & $\pi_3$
& $\lambda_1$ & $\lambda_2$ & $\lambda_3$
& $\beta_1$ & $\beta_3$
& $\nabla Q(\beta_1)$ & $\nabla Q(\beta_3)$ \\
\hline
EM       
& 0.13 & 0.62 & 0.26
& 1.04 & 0.026 & $2.9\times 10^{-152}$
& 1.56 & 78.57
& $6.7\times 10^{-16}$ & $-1.4\times 10^{-14}$ \\

DAEM     
& 0.33 & 0.33 & 0.33
& 0.027 & 0.022 & $2.7\times 10^{-2}$
& 0.95 & 0.95
& $2.7\times 10^{-13}$ & $-4.3\times 10^{-14}$ \\

Barrier  
& 0.13 & 0.62 & 0.26
& 1.05 & 0.026 & $2.9\times 10^{-152}$
& 1.00 & 78.57
& $2.1$ & $-1.3\times 10^{-10}$ \\

DHEM     
& 0.13 & 0.62 & 0.26
& 1.05 & 0.026 & $2.9\times 10^{-152}$
& 1.00 & 78.57
& $2.1$ & $-1.3\times 10^{-10}$ \\

Adaptive DHEM 
& 0.24 & 0.51 & 0.25
& 0.26 & 0.025 & $2.5\times 10^{-151}$
& 0.57 & 78.09
& $4.4\times 10^{-6}$ & $-1.1\times 10^{-7}$ \\
\hline
\end{tabular}
\end{table}

\begin{figure}[!h]
  \centering
  \begin{subfigure}[t]{0.49\linewidth}
    \centering
    \includegraphics[width=\linewidth]{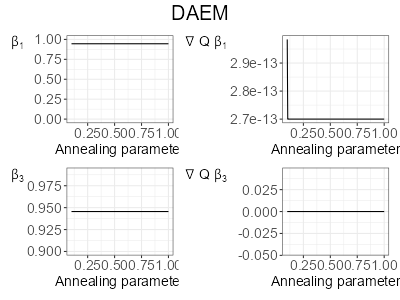}
    \label{fig:daem-trace}
  \end{subfigure}
  \hfill
  \begin{subfigure}[t]{0.49\linewidth}
    \centering
    \includegraphics[width=\linewidth]{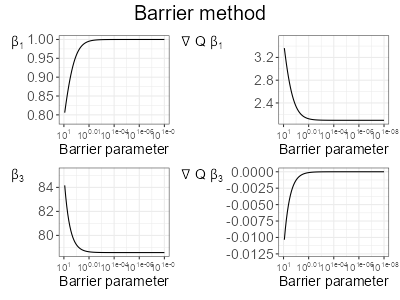}
    \label{fig:bm-trace}
  \end{subfigure}

  \vspace{0.6em}

  \begin{subfigure}[t]{0.49\linewidth}
    \centering
    \includegraphics[width=\linewidth]{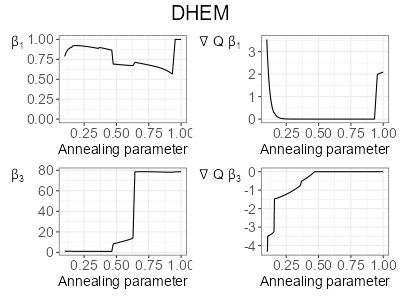}
    \label{fig:dhem-trace}
  \end{subfigure}
  \hfill
  \begin{subfigure}[t]{0.49\linewidth}
    \centering
    \includegraphics[width=\linewidth]{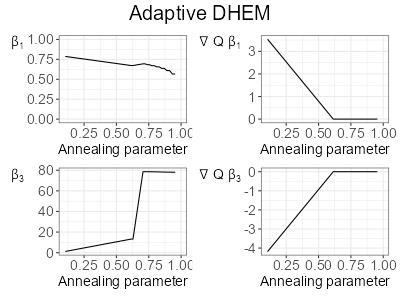}
    \label{fig:adapdhem-trace}
  \end{subfigure}
        \vspace{-12mm}
  \caption{Parameter traces ($\beta$) and gradient traces ($\nabla_\beta$) by four methods.}
  \label{fig:trace-all}
\end{figure}

\Cref{tab:converged_params_math_header} presents the resulting parameter estimates from five methods, and \Cref{fig:trace-all} shows the convergence traces of parameters for the four methods that use hyperparameters. Stationarity can be assessed by verifying that $\nabla Q(\beta_1)$ and $\nabla Q(\beta_3)$ are close to zero. The main distinction among methods lies in how they handle constraint satisfaction and stopping behavior along the homotopy path. In contrast, DHEM enforces the constraint via the barrier term but continues updating according to a fixed schedule, even when stationarity is effectively achieved at intermediate stages. This behavior highlights the advantage of adaptive DHEM, which stops updating at approximately $r=0.954$ when the monotonicity of the observed-data log-likelihood can no longer be maintained. Notably, the parameter estimates obtained by adaptive DHEM are nearly identical to those of DHEM at the point where the differential is minimized, indicating that adaptive stopping successfully captures the effective stationary solution without unnecessary progression along the homotopy path.


\section{Conclusion}

This paper proposes the dual-homotopy EM (DHEM) algorithm and its adaptive variant as a general approach for stable EM-based inference under parameter constraints. Existing constrained EM algorithms often encounter boundary stagnation and lose the monotonicity of the likelihood, which weakens the classical convergence theory of the EM algorithm. We interpret these issues as arising from the interaction between latent-variable distortion, which we refer to as latent effects, and constrained optimization. DHEM combines two homotopy mechanisms: deterministic annealing in the E-step to control the entropy of the latent posterior, and barrier-based optimization in the M-step to enforce parameter constraints smoothly. This approach allows stable exploration within the feasible region during early stages and a gradual transition toward standard EM behavior. Based on this framework, the adaptive DHEM algorithm eliminates reliance on fixed homotopy schedules by adjusting the annealing and barrier parameters in response to monotone improvement in the observed-data likelihood, thus restoring the monotonicity property of EM under constraints. From a theoretical standpoint, we formulate the adaptive DHEM updates through acceptance-rule–based mappings and establish global convergence using Zangwill’s Global Convergence Theorem. As a result, the adaptive DHEM produces sequences that satisfy the constraints, monotonically increase the observed-data likelihood, and converge to stationary points, without depending on distribution-specific arguments. In summary, the adaptive DHEM provides a broadly applicable, theoretically grounded constrained EM framework that is simple to implement and effective across various models. 
%
%
%
Future work may expand this approach to more general nonlinear constraints, high-dimensional settings, and large-scale data applications.

\section*{Appendix}





\subsection*{A.0 Assumptions }
\begin{enumerate}\label{supplement:assumptions}
    \item[(1)] The parameter space $\theta\in\Theta$ is compact.
    \item[(2)] The measure of $X_m$ is finite, i.e, $\int dX_m=\mu_m$.
    \item[(3)] The log-density is integrable with respect to $X_m$, i.e., $\int |\log P(X_o, X_m | \theta)|  dX_m < \infty.$
    \item[(4)] The log-likelihood ratio is integrable with respect to both the measure and the latent posterior distribution. That is, for all $\theta,\theta_1\in\Theta$ $\int\big|\log\big(\frac{P(X_m|X_o,\theta)}{P(X_m|X_o,\theta_1)}\big)\big|P(X_m|X_o,\theta)dX_m<\infty$.
\end{enumerate}


\subsection*{A.1 Some Lemmas and Proofs of Theorems}


\begin{lemma}\label{lemma:taylorZr}
Under the above assumptions (1)--(3), we have the Taylor expansion of $Z_r(\theta)$,
\[
Z_r(\theta)=\int P(X_c|\theta)^r\,dX_m,\quad X_c=(X_o,X_m),\quad r\in[0,1]
\]
as follows.
\begin{enumerate}
    \item For every \(r_0\in(0,1)\), \(Z_r(\theta)\) is differentiable at \(r_0\), with
    \[
    \frac{\partial}{\partial r} Z_r(\theta)\Big|_{r=r_0}=\int P(X_c|\theta)^{r_0}\log P(X_c|\theta)\,dX_m.
    \]
    Thus, it follows that 
    \[
    Z_r(\theta)=Z_{r_0}(\theta)+(r-r_0)\int P(X_c|\theta)^{r_0}\log P(X_c|\theta)\,dX_m+o(|r-r_0|),\quad r\to r_0.
    \]
    \item In particular, at \(r_0=0\),
    \[
    Z_0(\theta)=\int 1\,dX_m=\mu_m,
    \]
    and \(Z_r(\theta)\) admits the first-order expansion
    \[
    Z_r(\theta)=\mu_m+rA(\theta)+o(r),
    \qquad r\downarrow 0,
    \]
    where
    \[
    A(\theta)=\int \log P(X_c|\theta)\,dX_m.
    \]
\end{enumerate}
\end{lemma}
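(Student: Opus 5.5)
The plan is to differentiate under the integral sign, using dominated convergence with a dominating function built from assumptions (2) and (3). Write $p = P(X_c|\theta)$ for fixed $\theta$ and $X_o$; it is a nonnegative function of $X_m$. For $r \neq r_0$, the mean value theorem applied pointwise to $s \mapsto p^s$ gives
\[
\frac{p^r - p^{r_0}}{r-r_0} = p^{s}\log p
\]
for some $s$ between $r$ and $r_0$, which depends on $X_m$. On $\{p=0\}$ the difference quotient is zero once $r, r_0 > 0$, so that set can be ignored. The derivative formula in part 1 then follows once we bound $|p^s \log p|$ by an integrable function that does not depend on $s$, uniformly for $s$ in a compact neighbourhood $[a,b] \subset (0,1)$ of $r_0$.

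For the dominating function, split $\{X_m\}$ into the regions $\{p\le 1\}$ and $\{p>1\}$.

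On $\{p\le 1\}$, we have $p^s|\log p| \le |\log p|$, since $p^s \le 1$.

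On $\{p > 1\}$, we have $p^s \log p \le p^b \log p$. This is the main obstacle, because assumption (3) only controls $|\log p|$ and says nothing directly about $p^b\log p$. I would handle it with the elementary inequality $p^b\log p \le C_b\,(1 + p)$ for $b<1$. Since $\int p\, dX_m = P(X_o|\theta) = Z(\theta) < \infty$ (it is the observed-data likelihood), and $\int 1\, dX_m = \mu_m < \infty$ by assumption (2), the bound $C_b(1+p)$ is integrable.

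Combining the two regions, the dominating function is $|\log p| + C_b(1+p)$, which is integrable by assumptions (2) and (3) and the finiteness of $Z(\theta)$. Dominated convergence then yields the derivative formula. The expansion $Z_r = Z_{r_0} + (r-r_0)Z'_{r_0} + o(|r-r_0|)$ is just the definition of differentiability.

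For part 2, evaluating at $r=0$ gives $p^0 = 1$ wherever $p > 0$. Assumption (3) forces $\log p$ to be finite almost everywhere, so $p>0$ almost everywhere, and hence $Z_0 = \mu_m$. For the one-sided derivative at $0$, the same mean value argument with $s \in (0,r]$, $r \le 1/2$, is dominated by $|\log p| + C_{1/2}(1+p)$. Letting $r\downarrow 0$, the pointwise limit $p^s\log p \to \log p$ gives $Z_r = \mu_m + rA(\theta) + o(r)$ with $A(\theta) = \int \log p\, dX_m$, which is finite by assumption (3).

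Compactness from assumption (1) is not needed for this pointwise-in-$\theta$ statement. It would only be used to make the constants uniform in $\theta$, which the later lemmas may require. In that case I would additionally assume continuity in $\theta$ so that $\sup_\theta Z(\theta)$ is finite.
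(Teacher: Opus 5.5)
Your proposal is correct and follows essentially the same route as the paper. Both arguments use the pointwise derivative (mean value) computation, dominate by $|\log p|$ on $\{p\le 1\}$ and by a multiple of $p$ on $\{p>1\}$, and apply dominated convergence at interior $r_0$ and for the right derivative at $0$. The only differences are minor: you use $C_b(1+p)$ where the paper uses $\frac{1}{\alpha e}p$, and you note explicitly that Assumption (3) forces $p>0$ almost everywhere, which justifies $Z_0=\mu_m$ and a point the paper leaves implicit.
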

\begin{proof}
Let
$
p_\theta(X_m):=P(X_c|\theta)=P(X_o,X_m|\theta).
$
We first prove differentiability at an interior point \(r_0\in(0,1)\). Fix \(r_0\in(0,1)\) and choose \(\delta>0\) such that
$
[r_0-\delta,r_0+\delta]\subset(0,1).
$
For each \(X_m\),
\[
\frac{\partial}{\partial r}p_\theta(X_m)^r
=
p_\theta(X_m)^r\log p_\theta(X_m).
\]
Thus, it suffices to justify differentiation under the integral sign. For \(t>0\) and \(r\in[r_0-\delta,r_0+\delta]\subset(0,1)\), we bound \(t^r|\log t|\) separately on \(\{t\le 1\}\) and \(\{t>1\}\).
If \(0<t\le 1\), then \(t^r\le 1\), so
$
t^r|\log t|\le |\log t|.
$
When \(t>1\), let
$
\alpha:=1-(r_0+\delta)>0.
$
Since \(\log t\le \frac{1}{\alpha e}t^\alpha\) for all \(t>1\), we have
\[
t^r\log t
\le t^{r_0+\delta}\log t
\le \frac{1}{\alpha e} t^{r_0+\delta+\alpha}
=
\frac{1}{\alpha e} t.
\]
Hence, for all \(r\in[r_0-\delta,r_0+\delta]\), it follows that 
\[
|p_\theta(X_m)^r\log p_\theta(X_m)|
\le
|\log p_\theta(X_m)|\,\mathbf 1_{\{p_\theta\le 1\}}
+
C\,p_\theta(X_m)\mathbf 1_{\{p_\theta>1\}},
\]
where \(C=\frac{1}{\alpha e}\).

Now, the first term is integrable by Assumption (3), since
$
\int |\log p_\theta(X_m)|\,dX_m<\infty.
$
The second term is integrable because
\[
\int p_\theta(X_m)\,dX_m
=
\int P(X_o,X_m|\theta)\,dX_m
=
P(X_o|\theta)
<\infty.
\]
Therefore, 
$
\{\,p_\theta(X_m)^r\log p_\theta(X_m):r\in[r_0-\delta,r_0+\delta]\,\}
$
is dominated by an \(L^1(dX_m)\)-function. By the dominated convergence theorem,
\[
\frac{\partial}{\partial r}Z_r(\theta)\Big|_{r=r_0}
=
\int p_\theta(X_m)^{r_0}\log p_\theta(X_m)\,dX_m.
\]
This proves differentiability for every \(r_0\in(0,1)\), and the first-order Taylor expansion at \(r_0\in(0,1)\) follows.

Next, consider \(r_0=0\). Since
$
Z_0(\theta)=\int p_\theta(X_m)^0\,dX_m=\int 1\,dX_m=\mu_m
$
by Assumption (2), it remains to identify the right derivative at \(0\).
For \(r>0\), by the mean value theorem, for each \(X_m\), there exists \(\xi_r(X_m)\in(0,r)\) such that
\[
\frac{p_\theta(X_m)^r-1}{r}
=
p_\theta(X_m)^{\xi_r(X_m)}\log p_\theta(X_m).
\]
Choose \(r_*>0\) with \(r_*<1\). For \(0<r\le r_*\), the same domination argument as above yields
\[
\left|\frac{p_\theta(X_m)^r-1}{r}\right|
\le
|\log p_\theta(X_m)|\,\mathbf 1_{\{p_\theta\le 1\}}
+
C_*\,p_\theta(X_m)\mathbf 1_{\{p_\theta>1\}},
\]
for some constant \(C_*>0\), and the right-hand side is integrable.
Moreover, pointwise a.e.,
\[
\frac{p_\theta(X_m)^r-1}{r}\to \log p_\theta(X_m)
\qquad (r\downarrow 0).
\]
Hence, by the dominated convergence theorem,
\[
\lim_{r\downarrow 0}\frac{Z_r(\theta)-Z_0(\theta)}{r}
=
\int \log p_\theta(X_m)\,dX_m.
\]
Therefore, 
\[
Z_r(\theta)
=
\mu_m
+
r\int \log P(X_c|\theta)\,dX_m
+
o(r),
\qquad r\downarrow 0.
\]
Letting
\[
A(\theta):=\int \log P(X_c|\theta)\,dX_m
\]
completes the proof.
\end{proof}

\begin{lemma}\label{lemma:Zr}
By \Cref{lemma:taylorZr}, $Z_r(\theta)$ admits a first-order Taylor expansion around $r=0$. Then, 
$
Z_r(\theta)=\mu_m+rA(\theta)+o(r),
$
where
$
A(\theta)=\int \log P(X_c|\theta)dX_m.
$
Hence, it follows that 
\[
\log\frac{Z_r(\theta_1)}{Z_r(\theta_2)}=\frac{r}{\mu_m}\big(A(\theta_1)-A(\theta_2)\big)+o(r).
\]
\end{lemma}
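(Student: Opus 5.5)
The plan is to obtain the expansion of the log-ratio directly from the first-order expansion of \Cref{lemma:taylorZr}(2), followed by a first-order expansion of the logarithm around $\mu_m$. Fix $\theta_1,\theta_2\in\Theta$ and write $Z_r(\theta_j)=\mu_m+rA(\theta_j)+\varepsilon_j(r)$ with $\varepsilon_j(r)=o(r)$ as $r\downarrow 0$. This is exactly what \Cref{lemma:taylorZr} provides under Assumptions (1)--(3). Note that $A(\theta_j)$ is finite by Assumption (3), and $\mu_m\in(0,\infty)$ by Assumption (2). The positivity of $\mu_m$ is implicit in the setup and I would state it explicitly.

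The key step is to factor out $\mu_m$:
\[
\log Z_r(\theta_j)=\log\mu_m+\log\Big(1+\tfrac{r}{\mu_m}A(\theta_j)+\tfrac{\varepsilon_j(r)}{\mu_m}\Big).
\]
Set $u_j(r)=\tfrac{r}{\mu_m}A(\theta_j)+\tfrac{\varepsilon_j(r)}{\mu_m}$. Then $u_j(r)=O(r)\to 0$, so for small $r$ we have $|u_j(r)|<1/2$ and $1+u_j(r)>0$, which makes the logarithm well defined. Using $\log(1+u)=u+O(u^2)$ for $|u|\le 1/2$, we get
\[
\log Z_r(\theta_j)=\log\mu_m+\tfrac{r}{\mu_m}A(\theta_j)+o(r),
\]
because $\varepsilon_j(r)/\mu_m=o(r)$ and $O(u_j^2)=O(r^2)=o(r)$. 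Subtracting the two expansions for $j=1,2$ cancels the $\log\mu_m$ terms and yields the claim, since the sum of two $o(r)$ terms is $o(r)$.

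There is no real obstacle here. The only delicate points are bookkeeping ones. First, we need $Z_r(\theta_j)>0$ so that the ratio and its logarithm make sense. This holds because $P(X_c|\theta)^r>0$ wherever the density is positive, and in any case $Z_r(\theta_j)\to\mu_m>0$. Second, the $o(r)$ remainder depends on $(\theta_1,\theta_2)$. The statement is pointwise in $\theta$, so no uniformity is needed.

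If uniformity over $\Theta$ were later required, for example in the proof of \Cref{thm:reducingLatentEffect}, I would try to get it from compactness (Assumption (1)). The approach would be to bound the remainder of \Cref{lemma:taylorZr} using the dominating function from its proof, provided that function can be chosen uniformly in $\theta$. That would be the only nontrivial extension.
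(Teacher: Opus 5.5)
Your proposal is correct and follows essentially the same route as the paper: expand $Z_r(\theta)=\mu_m+rA(\theta)+o(r)$, factor out $\mu_m$, and apply $\log(1+x)=x+o(x)$. The one difference is that you cite \Cref{lemma:taylorZr} directly. The paper instead re-derives the expansion by integrating the pointwise relation $P(X_c|\theta)^r=1+r\log P(X_c|\theta)+o(r)$, and interchanging $o(r)$ with the integral there is only justified by the dominated-convergence argument of \Cref{lemma:taylorZr}, so your version is the cleaner one.
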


\begin{proof}
Using $a^r=\exp(r\log a)$, we have 
$
P(X_c|\theta)^r=\exp(r\log P(X_c|\theta))=1+r\log P(X_c|\theta)+o(r).
$
Integrating both sides yields
\[
Z_r(\theta)=\int (1+r\log P(X_c|\theta)+o(r))\,dX_m=\mu_m+rA(\theta)+o(r).
\]
Then, applying $\log(1+x)=x+o(x)$ gives the result.
\end{proof}

\begin{lemma}\label{lemma:DKL}
\[
\frac{1}{r}D_{KL}(P_{r},\theta_0\|P_{r},\theta)=O(r).
\]
\end{lemma}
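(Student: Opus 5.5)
We prove $\frac{1}{r}D_{KL}(P_r,\theta_0\|P_r,\theta)=O(r)$ as $r\downarrow 0$ by expanding the annealed posterior to first order in $r$, in the spirit of Lemma~\ref{lemma:taylorZr} and Lemma~\ref{lemma:Zr}. Write $\ell_\theta(X_m)=\log P(X_c|\theta)$. Then
\[
P_r(X_m|X_o,\theta)=\frac{e^{r\ell_\theta(X_m)}}{Z_r(\theta)} .
\]
The expansions of Lemma~\ref{lemma:taylorZr} give $P_r(X_m|X_o,\theta)=\mu_m^{-1}\bigl(1+r(\ell_\theta-\bar\ell_\theta)+o(r)\bigr)$, where $\bar\ell_\theta=A(\theta)/\mu_m$. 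So as $r\to0$ both annealed posteriors collapse to the same uniform density $1/\mu_m$ on the finite measure space of $X_m$ (Assumption (2)), and they differ only at order $r$.

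The key step is the standard local behaviour of KL divergence near a common reference measure. Take the log-ratio:
\[
\log\frac{P_r(X_m|X_o,\theta_0)}{P_r(X_m|X_o,\theta)}
= r\bigl(\ell_{\theta_0}-\ell_\theta\bigr)-\log\frac{Z_r(\theta_0)}{Z_r(\theta)}
= r\bigl[(\ell_{\theta_0}-\bar\ell_{\theta_0})-(\ell_\theta-\bar\ell_\theta)\bigr]+o(r).
\]
The second equality uses Lemma~\ref{lemma:Zr}. Call the bracket $g(X_m)$. Its integral against the uniform density $1/\mu_m$ is exactly zero. Integrating the log-ratio against $P_r(\cdot|X_o,\theta_0)=\mu_m^{-1}(1+r h+o(r))$, with $h=\ell_{\theta_0}-\bar\ell_{\theta_0}$, the order-$r$ term is $\frac{r}{\mu_m}\int g\,dX_m=0$. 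The first nonvanishing contribution is therefore
\[
\frac{r^2}{\mu_m}\int g\,h\,dX_m+o(r^2).
\]
More cleanly, one can use $D_{KL}(p\|q)\le \chi^2(p\|q)$ or the second-order expansion $D_{KL}=\tfrac12\int (p-q)^2/q+o(\|p-q\|^2)$. Here $p-q=\frac{r}{\mu_m}g+o(r)$ and $q\to 1/\mu_m$, which gives
\[
D_{KL}(P_r,\theta_0\|P_r,\theta)=\frac{r^2}{2\mu_m}\int g^2\,dX_m+o(r^2).
\]
Dividing by $r$ yields $O(r)$.

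The main obstacle is making these expansions rigorous rather than pointwise. The $o(r)$ remainders must be controlled in $L^1$ (for the first-order terms) and in fact in $L^2$, since the leading term involves $\int g^2$. Assumption (3) only gives $\ell_\theta\in L^1(dX_m)$. I would first try to handle this with the dominated-convergence bounds from the proof of Lemma~\ref{lemma:taylorZr}, applied to $(e^{rt}-1-rt)/r^2$. If integrability of $\ell_\theta^2$ is needed, I would add it as a mild strengthening of Assumption (3). Alternatively, I would avoid second-order terms altogether with the bound
\[
D_{KL}(p\|q)\le \int \frac{(p-q)^2}{q}\,dX_m ,
\]
where the denominator $q=P_r(\cdot|X_o,\theta)$ stays bounded below for small $r$ whenever $\ell_\theta$ is bounded below. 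The constant in $O(r)$ depends on $(\theta_0,\theta)$, but it can be made uniform over the compact $\Theta$ of Assumption (1) provided $\int g^2$ is continuous, hence bounded, there.
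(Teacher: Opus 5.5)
Your route differs from the paper's, and it is the one that can actually deliver the stated rate. The paper writes $\frac1r D_{KL}(P_r,\theta_0\|P_r,\theta)$ as $\int\log\frac{P(X_c|\theta_0)}{P(X_c|\theta)}\,P_r(X_m|X_o,\theta_0)\,dX_m+\frac1r\log\frac{Z_r(\theta)}{Z_r(\theta_0)}$. It expands each piece to leading order only and notes that the $\frac{1}{\mu_m}\bigl(A(\theta_0)-A(\theta)\bigr)$ terms cancel. But the second piece is known only up to $o(1)$ (the $o(r)$ of Lemma 2 divided by $r$), so that argument yields $\frac1r D_{KL}=o(1)$, and the final ``$+O(r)$'' is unjustified. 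Its first piece also tacitly integrates a pointwise $O(r)$ against the log-ratio, which needs more than $L^1$ integrability.

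The paper's first-order argument is short and uses only Lemmas 1--2, but it cannot reach $O(r)$. Your second-order argument costs an extra integrability hypothesis and delivers the true rate with an explicit constant. The hypothesis you flag is genuinely necessary, not overcaution. Take $X_m\in(0,1)$ with Lebesgue measure and $\log P(X_c|\theta)=-\theta x^{-1/2}$, $\theta\in[0,1]$: all Appendix assumptions hold. Yet for $\theta_0=0$, $\theta=1$ one gets $D_{KL}=2r+\log Z_r(\theta)=r^2\log(1/r)+O(r^2)$, so $\frac1r D_{KL}\sim r\log(1/r)$, which is $o(1)$ but not $O(r)$.

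A clean rigorous form of your idea is the exact identity
\[
D_{KL}(P_r,\theta_0\|P_r,\theta)=r^2\int_0^1(1-s)\operatorname{Var}_{\pi_{r,s}}\bigl(\ell_\theta-\ell_{\theta_0}\bigr)\,ds,\qquad \pi_{r,s}\propto P(X_c|\theta_0)^{r(1-s)}P(X_c|\theta)^{rs}.
\]
It is Taylor's formula with integral remainder for $s\mapsto\log\int e^{r(\ell_{\theta_0}+s(\ell_\theta-\ell_{\theta_0}))}\,dX_m$. It shows that $\frac1rD_{KL}=O(r)$ holds exactly when $\int_0^1(1-s)\operatorname{Var}_{\pi_{r,s}}(\ell_\theta-\ell_{\theta_0})\,ds$ stays bounded as $r\downarrow0$. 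Under the corresponding integrability, it also recovers your limit $D_{KL}/r^2\to\frac{1}{2\mu_m}\int g^2$.

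One slip to remove: your first heuristic cannot produce ``$\frac{r^2}{\mu_m}\int gh+o(r^2)$''. There you integrate $rg+o(r)$ against $\mu_m^{-1}(1+rh+o(r))$, but the $o(r)$ in the log-ratio is the normalizing constant beyond first order. That constant itself contributes $-\frac{r^2}{2}\bigl(\operatorname{Var}_U\ell_{\theta_0}-\operatorname{Var}_U\ell_\theta\bigr)$, with $U$ the uniform density $1/\mu_m$. Adding it turns $\frac{1}{\mu_m}\int gh$ into the correct coefficient $\frac{1}{2\mu_m}\int g^2$ from your second computation. Only the order $r^2$ matters for the lemma, but that display as written is wrong.
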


\begin{proof}
By definition, we have 
\begin{align*}
\frac{1}{r}D_{KL}(P_r,\theta_0||P_r,\theta)&=\frac{1}{r}\int \log\bigg(\frac{P_r(X_m|X_o,\theta_0)}{P_r(X_m|X_o,\theta)}\bigg)P_r(X_m|X_o,\theta_0)dX_m\\
    &=\frac{1}{r}\int \log\bigg(\frac{P(X_c|\theta_0)^rZ_r(\theta)}{P(X_c|\theta)^rZ_r(\theta_0)}\bigg)P_r(X_m|X_o,\theta_0)dX_m\\
    &=\int \log\bigg(\frac{P(X_c|\theta_0)}{P(X_c|\theta)}\bigg)P_r(X_m|X_o,\theta_0)dX_m +\frac{1}{r}\log\bigg(\frac{Z_r(\theta)}{Z_r(\theta_0)}\bigg).
\end{align*}
From the following results, 
\[
P_{r}(X_m| X_o,\theta_0)=\frac{\exp\bigl(r\log P(X_c|\theta_0)\bigr)}{Z_r(\theta_0)}, 
\]
$\exp\bigl(r\log P(X_c|\theta_0)\bigr)=1+r\log P(X_c|\theta_0)+o(r)$, and 
$
Z_r(\theta_0)=\mu_m+O(r),
$
we obtain
\[
P_{r}(X_m|X_o,\theta_0)=\frac{1+r\log(P(X_c|\theta_0))+o(r)}{\mu_m+O(r)}. 
\]
Since
\[
\frac{1}{\mu_m+O(r)}=\frac{1}{\mu_m}+O(r),
\]
it follows that
\[
P_{r}(X_m| X_o,\theta_0)=\left(1+r\log(P(X_c|\theta_0))+o(r)\right)\left(\frac{1}{\mu_m}+O(r)\right)=\frac{1}{\mu_m}+O(r). 
\]
Therefore, we have 
\begin{align*}
\int \log\bigg(\frac{P(X_c|\theta_0)}{P(X_c|\theta)}\bigg)P_{r,\theta_0}(X_m| X_o)\,dX_m
&=\int \log\bigg(\frac{P(X_c|\theta_0)}{P(X_c|\theta)}\bigg)\bigg(\frac{1}{\mu_m}+O(r)\bigg)\,dX_m\\
&=\frac{1}{\mu_m}\int \log\bigg(\frac{P(X_c|\theta_0)}{P(X_c|\theta)}\bigg)\,dX_m + O(r)\\
&=\frac{1}{\mu_m}\bigl(A(\theta_0)-A(\theta)\bigr)+O(r), 
\end{align*}
where
$
A(\theta)=\int \log P(X_c|\theta)\,dX_m.
$
On the other hand, by Lemma~\ref{lemma:Zr},
\[
\log\bigg(\frac{Z_r(\theta)}{Z_r(\theta_0)}\bigg)=\frac{r}{\mu_m}\bigl(A(\theta)-A(\theta_0)\bigr)+o(r), 
\]
thus, 
\[
\frac{1}{r}\log\bigg(\frac{Z_r(\theta)}{Z_r(\theta_0)}\bigg)=\frac{1}{\mu_m}\bigl(A(\theta)-A(\theta_0)\bigr)+o(1). 
\]
Then, combining the two expansions provides
\begin{align*}
\frac{1}{r}D_{KL}(P_{r,\theta_0}\|P_{r,\theta})=\frac{1}{\mu_m}\bigl(A(\theta_0)-A(\theta)\bigr)+\frac{1}{\mu_m}\bigl(A(\theta)-A(\theta_0)\bigr)+O(r)
=O(r). 
\end{align*}
\end{proof}

\paragraph{Proof of \Cref{thm:reducingLatentEffect}}\label{pf:reducingLatentEffect}
\begin{proof}
$Q_r(\theta|\theta_0)$ is expressed as 
    \begin{align*}
    &Q_r(\theta|\theta_0) \triangleq l_o(\theta)+G(\theta,\theta_0| r)\\
    &\qquad=\int \log \bigg(P(X_m|X_o,\theta)P(X_o|\theta)\bigg)P_r(X_m|X_o,\theta_0)dX_m\\
    &\qquad=\log P(X_o|\theta)+\int \bigg(\log P(X_m|X_o,\theta)\bigg)P_r(X_m|X_o,\theta_0)dX_m\\
    &\qquad=\log P(X_o|\theta) + \frac{Z(\theta_0)^r}{Z_r(\theta_0)}\int \log P(X_m|X_o,\theta) \frac{P(X_c|\theta_0)^r}{Z(\theta_0)^r} dX_m\\
    &\qquad=\log P(X_o|\theta) + \frac{Z(\theta_0)^r}{Z_r(\theta_0)}\int \log P(X_m|X_o,\theta) P(X_m|X_o,\theta_0)^r dX_m\\
    &\qquad=\log P(X_o|\theta) + \frac{Z(\theta_0)^r}{Z_r(\theta_0)}\int \log P(X_m|X_o,\theta) P(X_m|X_o,\theta_0)^{r-1}P(X_m|X_o,\theta_0) dX_m.
    \end{align*}
    To simplify the expression, we write the terms separately as
    \begin{align*}
    &\int \log P(X_m|X_o,\theta) P(X_m|X_o,\theta_0)^{r-1}P(X_m|X_o,\theta_0) dX_m\\
    &\qquad=E_{P(\cdot|X_o,\theta_0)}\bigg(\log P(X_m|X_o,\theta) P(X_m|X_o,\theta_0)^{r-1}\bigg)\\
    &\qquad=E_{P(\cdot|X_o,\theta_0)}\bigg(\frac{\log P(X_m|X_o,\theta)}{P(X_m|X_o,\theta_0)} P(X_m|X_o,\theta_0)^{r}\bigg).
    \end{align*}
    By the Cauchy-Schwarz inequality, we have 
    \begin{align*}
    &\bigg|E_{P(\cdot|X_o,\theta_0)} \bigg(\frac{\log P(X_m|X_o,\theta)}{P(X_m|X_o,\theta_0)}P(X_m|X_o,\theta_0)^{r}\bigg)\bigg|\\
    &\qquad\le \sqrt{E_{P(\cdot|X_o,\theta_0)}\bigg(\frac{\log P(X_m|X_o,\theta)}{P(X_m|X_o,\theta_0)}\bigg)^2}\sqrt{E_{P(\cdot|X_o,\theta_0)}\bigg(P(X_m|X_o,\theta_0)^{2r}\bigg)}\\
    &\qquad=\sqrt{E_{P(\cdot|X_o,\theta_0)}\bigg(\frac{\log P(X_m|X_o,\theta)}{P(X_m|X_o,\theta_0)}\bigg)^2}\sqrt{\int P(X_m|X_o,\theta_0)^{2r+1}dX_m}\\
    &\qquad=\sqrt{E_{P(\cdot|X_o,\theta_0)}\bigg(\frac{\log P(X_m|X_o,\theta)}{P(X_m|X_o,\theta_0)}\bigg)^2}\sqrt{\frac{Z_{2r+1}(\theta_0)}{Z(\theta_0)^{2r+1}}}. 
    \end{align*}
Combining the above results, we obtain the upper bound of the latent effect $G(\theta,\theta_0| r)$ as 
\begin{align*}
    G(\theta,\theta_0| r)&\le \frac{Z(\theta_0)^r}{Z_r(\theta_0)}\sqrt{E_{P(\cdot|X_o,\theta_0)}\bigg(\frac{\log P(X_m|X_o,\theta)}{P(X_m|X_o,\theta_0)}\bigg)^2}\sqrt{\frac{Z_{2r+1}(\theta_0)}{Z(\theta_0)^{2r+1}}}.
\end{align*}
Note that the following term
$$\sqrt{E_{P(\cdot|X_o,\theta_0)}\bigg(\frac{\log P(X_m|X_o,\theta)}{P(X_m|X_o,\theta_0)}\bigg)^2}$$
does not depend on $r$. By denoting the part depending on $r$ and $\theta_0$ as $h(r|\theta_0)$, we have 
\[
h(r|\theta_0)= \frac{Z(\theta_0)^{2r}}{Z_r(\theta_0)^2}\frac{Z_{2r+1}(\theta_0)}{Z(\theta_0)^{2r+1}}. 
\]
Then, it follows that  
\[
\log h(r|\theta_0)=\log\bigg(\frac{Z_{2r+1}(\theta_0)}{Z(\theta_0)^{2r+1}}\bigg)-2\log \bigg(\frac{Z_r(\theta_0)}{Z(\theta_0)^r}\bigg). 
\]
To obtain an upper bound for $\log h(r|\theta_0)$, we treat the first and second terms of the above expression separately. For the first term, we use the fact that the logarithm function is concave,
\[    \log \bigg(\frac{Z_r(\theta_0)}{Z(\theta_0)^r}\bigg)=\log \bigg( E_{P(X_m|X_o,\theta_0)} P(X_m|X_o,\theta_0)^{r-1}\bigg)\ge (r-1)E_{P(X_m|X_o,\theta_0)}\log P(X_m|X_o,\theta_0). \]
For the second term, we use the concavity of the function for $r<\frac{1}{2}$, 
\[\log \bigg(\frac{Z_{2r+1}(\theta_0)}{Z(\theta_0)^{2r+1}}\bigg)=\log \bigg(E_{P(X_m|X_o,\theta_0)}P(X_m|X_o,\theta_0)^{2r}\bigg)\le 2r \log \bigg(E_{P(X_m|X_o,\theta_0)}P(X_m|X_o,\theta_0)\bigg).\\\]
Thus, the upper bound is given by
\begin{align*}
    \log h(r|\theta_0)\le &~ 2r\bigg( \log\bigg(E_{P(X_m|X_o,\theta_0)}P(X_m|X_o,\theta_0)\bigg)-E_{P(X_m|X_o,\theta_0)}\log P(X_m|X_o,\theta_0)\bigg)\\
    &+2E_{P(X_m|X_o,\theta_0)}\log P(X_m|X_o,\theta_0).
\end{align*}   
By Jensen's inequality, the coefficient of $r$ is nonnegative, that is, 
 \[\log\bigg(E_{P(X_m|X_o,\theta_0)}P(X_m|X_o,\theta_0)\bigg)\ge E_{P(X_m|X_o,\theta_0)}\log P(X_m|X_o,\theta_0).\]
Hence, the upper bound of $\log h(r|\theta_0)$ is monotonically increasing in $r$; therefore, the upper bound of $h(r|\theta_0)$ decreases as $r$ decreases.

\end{proof}

\begin{lemma}[Domination of $P_r$]\label{lemma:dominationPr}
Suppose $\mu_m := \int dX_m < \infty$. Let, for any $r \in (0,1]$,
\[
P_r(X_m|X_o,\theta)
=
\frac{P(X_o,X_m|\theta)^r}{Z_r(\theta)}.
\]
Then, we have 
\[
P_r(X_m|X_o,\theta)\le\frac{1}{c_0(\theta)}\big(1 + P(X_o,X_m|\theta)\big),
\]
where $c_*(\theta) := \inf_{r\in[r_0,1]} Z_r(\theta) > 0$ for any fixed $r_0 \in (0,1]$.
\end{lemma}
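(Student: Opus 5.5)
The bound follows from two separate estimates: one for the numerator $P(X_o,X_m|\theta)^r$ and one for the normalizer $Z_r(\theta)$. Throughout I write $p=P(X_o,X_m|\theta)$ and read $c_0(\theta)$ in the statement as the constant $c_*(\theta)=\inf_{r\in[r_0,1]}Z_r(\theta)$, with $r$ restricted to $[r_0,1]$.

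\emph{Step 1: bound the numerator.} For $r\in(0,1]$ and $p\ge 0$ I would show $p^r\le 1+p$. If $p\le 1$, then $p^r\le 1$. If $p>1$, then $p^r\le p$ because $r\le 1$. In both cases $p^r\le \max(1,p)\le 1+p$, and this holds pointwise in $X_m$.

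\emph{Step 2: bound the normalizer below.} I need $c_*(\theta)>0$ to justify dividing by it, and this is the only real obstacle. The plan is to show that $r\mapsto Z_r(\theta)$ is continuous and strictly positive on the compact interval $[r_0,1]$.
\begin{itemize}
\item \textbf{Positivity.} $Z_r(\theta)>0$ whenever $p$ is positive on a set of positive $dX_m$-measure. This holds because $Z_1(\theta)=P(X_o|\theta)>0$ for the observed data.
\item \textbf{Continuity.} The domination $p^r\le 1+p$ from Step 1 is an integrable bound, since $\int(1+p)\,dX_m=\mu_m+P(X_o|\theta)<\infty$ by Assumption (2). As $p^r$ is continuous in $r$ pointwise, dominated convergence gives continuity of $r\mapsto Z_r(\theta)$. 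Lemma~\ref{lemma:taylorZr} already gives differentiability on $(0,1)$, so only the endpoint $r=1$ needs this separate argument.
\end{itemize}
A continuous positive function on a compact interval attains a positive minimum, so $c_*(\theta)=\min_{r\in[r_0,1]}Z_r(\theta)>0$.

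\emph{Step 3: combine.} For every $r\in[r_0,1]$,
\[
P_r(X_m|X_o,\theta)=\frac{p^r}{Z_r(\theta)}\le \frac{1+p}{c_*(\theta)},
\]
which is the claimed domination.

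If a non-compactness-based argument is preferred for Step 2, one can use pointwise monotonicity instead. On the set $\{p\ge 1\}$ we have $p^r\ge 1$, and on $\{p<1\}$ we have $p^r\ge p$. Hence $Z_r(\theta)\ge \int\min(1,p)\,dX_m>0$ uniformly in $r$, which gives an explicit positive $c_*(\theta)$ directly. I would present this uniform lower bound as the main argument because it avoids the continuity step altogether, and keep the compactness argument as a remark.
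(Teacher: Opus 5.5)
Your proof is correct. Steps 1 and 3 are exactly the paper's argument: the pointwise bound $p^r\le 1+p$, then division by a lower bound on $Z_r(\theta)$.

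The difference lies in how the normalizer is bounded below. The paper only asserts that continuity of $r\mapsto Z_r(\theta)$ together with $Z_1(\theta)=P(X_o|\theta)>0$ gives $c_*(\theta)>0$. Taken literally, that yields positivity only in a neighborhood of $r=1$. Your compactness version supplies the two missing pieces:
\begin{itemize}
\item $Z_r(\theta)>0$ at every $r$, because $\{p>0\}$ has positive measure;
\item continuity on $[r_0,1]$ follows from dominated convergence with the integrable majorant $1+p$.
\end{itemize}

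Your preferred route, the pointwise inequality $p^r\ge\min\{1,p\}$ for all $r\in(0,1]$, is cleaner still. It gives the explicit bound $c_*(\theta)\ge\int\min\{1,P(X_o,X_m|\theta)\}\,dX_m>0$, valid uniformly on all of $(0,1]$. This has three consequences:
\begin{itemize}
\item the restriction to $[r_0,1]$ becomes unnecessary;
\item the continuity and compactness steps can be dropped;
\item the lower bound does not use $\mu_m<\infty$. That assumption is needed only to make $Z_r(\theta)$ finite and the majorant $1+p$ integrable.
\end{itemize}
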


\begin{proof}
For $p := P(X_o,X_m|\theta) \ge 0$ and $r \in (0,1]$, we have
$p^r \le 1 + p$. 
Hence, it follows that 
\[
P_r(X_m|X_o,\theta)=\frac{p^r}{Z_r(\theta)}\le\frac{1}{Z_r(\theta)}(1 + p). 
\]
From continuity of $Z_r(\theta)$ and the fact that $Z_1(\theta)=P(X_o|\theta)>0$, 
there exists $c_*(\theta)>0$ such that
$
Z_r(\theta)\ge c_*(\theta)>0~ \text{for all } r \in [r_0,1]
$
Therefore, we obtain 
\[
P_r(X_m|X_o,\theta)\le\frac{1}{c_*(\theta)}\big(1 + P(X_o,X_m|\theta)\big).
\]
\end{proof}

\paragraph{Proof of \Cref{thm:diffKLlowerbound}}\label{pf:diffKLlowerbound}
\begin{proof}
The quantity $\Delta D_{KL}(\theta_1||\theta_0,r)$ can be expressed as  
\begin{align*}
    &\Delta D_{KL}(\theta_1||\theta_0,r) = D_{KL}(P_r,\theta_0||P,\theta_1)-D_{KL}(P_r,\theta_0||P,\theta_0)\\
    &=\int \log \bigg(\frac{P(X_m|X_o,\theta_0)}{P(X_m|X_o,\theta_1)}\bigg)P_r(X_m|X_o,\theta_0)dX_m\\
    &=\int \log \bigg(\frac{P(X_m|X_o,\theta_0)}{P(X_m|X_o,\theta_1)}\bigg)\bigg(P_r(X_m|X_o,\theta_0)-P(X_m|X_o,\theta_0)\bigg)dX_m+D_{KL}(P,\theta_0||P,\theta_1).
\end{align*}
By Assumption (4) in A.0 and \Cref{lemma:dominationPr}, for $r \in [r_0,1]$, we have 
\begin{align*}
\int \bigg|\log \bigg(\frac{P(X_m|X_o,\theta_0)}{P(X_m|X_o,\theta_1)}\bigg)\bigg|
P_r(X_m|X_o,\theta_0)dX_m
&\le \frac{1}{c_0(\theta_0)} 
\int \bigg|\log \bigg(\frac{P(X_m|X_o,\theta_0)}{P(X_m|X_o,\theta_1)}\bigg)\bigg|
\bigg(1+P(X_m|X_o,\theta_0)\bigg)dX_m \\
&\in \mathcal{L}^1.
\end{align*}
Then, applying the dominated convergence theorem (DCT), we obtain
\[
\lim_{r\to1}
\int \log \bigg(\frac{P(X_m|X_o,\theta_0)}{P(X_m|X_o,\theta_1)}\bigg)
\bigg(P_r(X_m|X_o,\theta_0)-P(X_m|X_o,\theta_0)\bigg)dX_m=0.
\]
Therefore, it follows that 
\begin{align*}
\lim_{r\to 1}\Delta D_{KL}(\theta_1,\theta_0|r)= D_{KL}(P,\theta_0|P,\theta_1)= \delta > 0.
\end{align*}
 \end{proof}

\bibliography{reference_short}  
\bibliographystyle{apalike}

\end{document}